\documentclass[11pt]{article}
\pdfoutput=1
% options for macros
\usepackage{mdframed}
\usepackage{tikz,tkz-graph}
\usetikzlibrary{arrows,shapes}
\usetikzlibrary{calc}
\tikzset{>= latex}
\tikzstyle{vertex}=[circle, draw,fill=gray!20, inner sep=0pt, minimum size=16pt]
\tikzstyle{svertex}=[circle, draw,fill=gray!20, inner sep=0pt, minimum size=8pt]

%% pseudocode

\def\showauthornotes{0}
\def\showtableofcontents{0}
\def\showkeys{0}
\def\showdraftbox{0}
\def\showcolorlinks{1}
\def\usemicrotype{1}
\def\showfixme{0}

% {{{ etex }}}

\usepackage{etex}

% {{{ nag }}}

%\usepackage[l2tabu, orthodox]{nag}

% {{{ common }}}

\usepackage[utf8]{inputenc}

\usepackage{xspace,enumitem}

\usepackage[]{xcolor}

\usepackage[T1]{fontenc}
\usepackage[full]{textcomp}

% {{{ babelamerican }}}

\usepackage[american]{babel}

% {{{ mathtools }}}

\usepackage{mathtools}

% {{{ boldmath }}}

% fix for "too many math alphabets" problem
 % default 3
\usepackage{bm}

% {{{ amsthm }}}
\usepackage{amsthm}

\newtheorem{theorem}{Theorem}[section]
\newtheorem*{theorem*}{Theorem}

\newtheorem*{proposition*}{Proposition}
\newtheorem{lemma}[theorem]{Lemma}
\newtheorem*{lemma*}{Lemma}

\newtheorem*{conjecture*}{Conjecture}

\newtheorem*{fact*}{Fact}

\newtheorem*{hypothesis*}{Hypothesis}

\theoremstyle{definition}

\newtheorem{example}[theorem]{Example}

\newtheorem{openquestion}[theorem]{Open Question}

\theoremstyle{remark}
\newtheorem{claim}[theorem]{Claim}
\newtheorem*{claim*}{Claim}
\newtheorem{remark}[theorem]{Remark}
\newtheorem*{remark*}{Remark}

\newtheorem*{observation*}{Observation}

% {{{ geometry-nice }}}

% nicer margins  (lines are not as long)
\usepackage[letterpaper,
	top=1.2in,
	bottom=1.2in,
	left=1.4in,
	right=1.4in]{geometry}

% smaller margin for submission?
% \usepackage[letterpaper,
% top=1in,
% bottom=1in,
% left=1in,
%right=1in]{geometry}

% {{{ pxfonts }}}

\usepackage[varg]{pxfonts} % varg - uses nicer g,v,y,w

% {{{ showkeys }}}

\ifnum\showkeys=1
\usepackage[color]{showkeys}
\fi

% {{{ hyperref-option2 }}}
\definecolor{OliveGreen}{rgb}{0,0.6,0}
\ifnum\showcolorlinks=1
\usepackage[
pagebackref,
colorlinks=true,
urlcolor=blue,
linkcolor=blue,
citecolor=OliveGreen,
]{hyperref}
\fi

\ifnum\showcolorlinks=0
\usepackage[
pagebackref,
colorlinks=false,
pdfborder={0 0 0}
]{hyperref}
\fi

% {{{ prettyref }}}
\usepackage{prettyref}

% From manual:
% \newrefformat{eq}{\textup{(\ref{#1})}}
% \newrefformat{lem}{Lemma~\ref{#1}}
% \newrefformat{thm}{Theorem~\ref{#1}}
% \newrefformat{cha}{Chapter~\ref{#1}}
% \newrefformat{sec}{Section~\ref{#1}}
% \newrefformat{tab}{Table~\ref{#1} on page~\pageref{#1}}
% \newrefformat{fig}{Figure~\ref{#1} on page~\pageref{#1}}

\newcommand{\savehyperref}[2]{\texorpdfstring{\hyperref[#1]{#2}}{#2}}

\newrefformat{eq}{\savehyperref{#1}{\textup{(\ref*{#1})}}}
\newrefformat{lem}{\savehyperref{#1}{Lemma~\ref*{#1}}}
\newrefformat{def}{\savehyperref{#1}{Definition~\ref*{#1}}}
\newrefformat{thm}{\savehyperref{#1}{Theorem~\ref*{#1}}}
\newrefformat{cor}{\savehyperref{#1}{Corollary~\ref*{#1}}}
\newrefformat{cha}{\savehyperref{#1}{Chapter~\ref*{#1}}}
\newrefformat{sec}{\savehyperref{#1}{Section~\ref*{#1}}}
\newrefformat{app}{\savehyperref{#1}{Appendix~\ref*{#1}}}
\newrefformat{tab}{\savehyperref{#1}{Table~\ref*{#1}}}
\newrefformat{fig}{\savehyperref{#1}{Figure~\ref*{#1}}}
\newrefformat{hyp}{\savehyperref{#1}{Hypothesis~\ref*{#1}}}
\newrefformat{alg}{\savehyperref{#1}{Algorithm~\ref*{#1}}}
\newrefformat{rem}{\savehyperref{#1}{Remark~\ref*{#1}}}
\newrefformat{item}{\savehyperref{#1}{Item~\ref*{#1}}}
\newrefformat{step}{\savehyperref{#1}{step~\ref*{#1}}}
\newrefformat{conj}{\savehyperref{#1}{Conjecture~\ref*{#1}}}
\newrefformat{fact}{\savehyperref{#1}{Fact~\ref*{#1}}}
\newrefformat{prop}{\savehyperref{#1}{Proposition~\ref*{#1}}}
\newrefformat{prob}{\savehyperref{#1}{Problem~\ref*{#1}}}
\newrefformat{claim}{\savehyperref{#1}{Claim~\ref*{#1}}}
\newrefformat{relax}{\savehyperref{#1}{Relaxation~\ref*{#1}}}
\newrefformat{red}{\savehyperref{#1}{Reduction~\ref*{#1}}}
\newrefformat{part}{\savehyperref{#1}{Part~\ref*{#1}}}

% {{{ sref }}}

% short section reference
\newcommand{\Sref}[1]{\hyperref[#1]{\S\ref*{#1}}}

% {{{ nicefrac }}}
% commands for fractions
\usepackage{nicefrac}
% poor man's fraction

% similar commands: tfrac,dfrac

% {{{ microtype-option }}}

\ifnum\usemicrotype=1
\usepackage{microtype}
\fi

% {{{ authornotes }}}
\ifnum\showauthornotes=1
\newcommand{\Authornote}[2]{{\sffamily\small\color{red}{[#1: #2]}}}
\newcommand{\Authornotecolored}[3]{{\sffamily\small\color{#1}{[#2: #3]}}}
\newcommand{\Authorcomment}[2]{{\sffamily\small\color{gray}{[#1: #2]}}}
\newcommand{\Authorstartcomment}[1]{\sffamily\small\color{gray}[#1: }

\newcommand{\Authorfnote}[2]{\footnote{\color{red}{#1: #2}}}
\newcommand{\Authorfixme}[1]{\Authornote{#1}{\textbf{??}}}
\newcommand{\Authormarginmark}[1]{\marginpar{\textcolor{red}{\fbox{\Large #1:!}}}}
\else
\newcommand{\Authornote}[2]{}
\newcommand{\Authornotecolored}[3]{}
\newcommand{\Authorcomment}[2]{}
\newcommand{\Authorstartcomment}[1]{}

\newcommand{\Authorfnote}[2]{}
\newcommand{\Authorfixme}[1]{}
\newcommand{\Authormarginmark}[1]{}
\fi

% {{{ fixme }}}

% place red exclamation mark in margin
%\newcommand{\redmarginmarker}{\marginpar{\textcolor{red}{\fbox{\Large !}}}}

% short indicator for places that need fixing

\ifnum\showfixme=0

\fi

% {{{ boxedminipage }}}
\usepackage{boxedminipage}

% {{{ parentheses }}}
% various bracket-like commands
% round parentheses

% square brackets

% absolute value

% cardinality

% set

% norm

% 2-norm

% 2-norm squared

% norm squared

% 1-norm

% infty-norm

% inner product

% {{{ probability }}}
% expectation, probability, variance

% TODO: make case distinction if optional argument is not set

% {{{ mathfonts }}}

 \usepackage{dsfont}
\usepackage{mathrsfs}

% {{{ miscmacros }}}

% middle delimiter in the definition of a set

% tensor product

% add explanations to math displays

\newcommand{\textparen}[1]{\text{(#1)}}

\ifx\because\undefined
\newcommand{\because}[1]{\textparen{because #1}}
\else
\renewcommand{\because}[1]{\textparen{because #1}}
\fi

% spectral order (Loewner order)

% smallest and largest eigenvalue

% symmetric difference

% set of bits

% no stupid bullets for itemize environmentx

% control white space of list and display environments

% short for emptyset
%\newcommand{\eset}{\emptyset}
% moved to mathabbreviations

% short for epsilon
%\newcommand{\e}{\epsilon}
% moved to mathabbreviations

% super index with parentheses

% tensor power notation

% multiplicative inverse

% dual element

% subset
%\newcommand{\sse}{\subseteq}
% moved to mathabbreviations

% vertical space in math formula

% setminus

% define something by an equation (display)

% define something by an equation (inline)

% declare function f by $f \from X \to Y$

% big middle separator (for conditioning probability spaces)

% better vector definition and some variations

% punctuation at the end of a displayed formula

% inner product for matrices
\newcommand\bdot\bullet

% transpose

% indicator function / vector
%\ifx\mathds\undefined % use double stroke fonts if available

%\else
%\newcommand{\Ind}{\mathds 1}
%\fi

% place a qed symbol inside display formula
%\qedhere

% {{{ superscripts }}}

% {{{ mathoperators }}}

\DeclareMathOperator{\lb}{lb}

\DeclareMathOperator{\LP}{LP}
\DeclareMathOperator{\OPT}{OPT}

% operators with limits

% smaller summation/product symbols

% {{{ differentials }}}

% {{{ textabbreviations }}}

% some abbreviations

% {{{ foreignwords }}}

% {{{ names }}}
% Hungarian/Polish/East European names

% {{{ numbersets }}}
% number sets

\newcommand{\R}{\mathbb R}

% {{{ problems }}}

% macros to denote computational problems

% use texorpdfstring to avoid problems with hyperref (can use problem
% macros also in headings

% list of problems

% {{{ alphabet }}}

\newcommand{\cA}{\mathcal A}

\newcommand{\cC}{\mathcal C}

\newcommand{\cF}{\mathcal F}

% {{{ leqslant }}}
% slanted lower/greater equal signs
\renewcommand{\leq}{\leqslant}
\renewcommand{\le}{\leqslant}
\renewcommand{\geq}{\geqslant}

% {{{ draftbox }}}
\ifnum\showdraftbox=1
\newcommand{\draftbox}{\begin{center}
  \fbox{%
    \begin{minipage}{2in}%
      \begin{center}%
%        \begin{Large}%
          \Large\textsc{Working Draft}\\%
%        \end{Large}\\
        Please do not distribute%
      \end{center}%
    \end{minipage}%
  }%
\end{center}
\vspace{0.2cm}}
\else
\newcommand{\draftbox}{}
\fi

% {{{ varepsilon }}}

\let\epsilon=\varepsilon

% {{{ numberequationwithinsection }}}
\numberwithin{equation}{section}

% {{{ restate }}}
% set of macros to deal with restating theorem environments (or anything
% else with a label)

% stolen from Boaz's latex macros

% \MYstore{A}{B} assigns variable A value B
\newcommand{\MYstore}[2]{%
  \global\expandafter \def \csname MYMEMORY #1 \endcsname{#2}%
}

% \MYload{A} outputs value stored for variable A
\newcommand{\MYload}[1]{%
  \csname MYMEMORY #1 \endcsname%
}

% new label command, stores current label in \MYcurrentlabel
\newcommand{\MYnewlabel}[1]{%
  \newcommand\MYcurrentlabel{#1}%
  \MYoldlabel{#1}%
}

% new label command that doesn't do anything
\newcommand{\MYdummylabel}[1]{}

\newcommand{\torestate}[1]{%
  % overwrite label command
  \let\MYoldlabel\label%
  \let\label\MYnewlabel%
  #1%
  \MYstore{\MYcurrentlabel}{#1}%
  % restore old label command
  \let\label\MYoldlabel%
}

\newcommand{\restatetheorem}[1]{%
  % overwrite label command with dummy
  \let\MYoldlabel\label
  \let\label\MYdummylabel
  \begin{theorem*}[Restatement of \prettyref{#1}]
    \MYload{#1}
  \end{theorem*}
  \let\label\MYoldlabel
}

\newcommand{\restatelemma}[1]{%
  % overwrite label command with dummy
  \let\MYoldlabel\label
  \let\label\MYdummylabel
  \begin{lemma*}[Restatement of \prettyref{#1}]
    \MYload{#1}
  \end{lemma*}
  \let\label\MYoldlabel
}

\newcommand{\restateprop}[1]{%
  % overwrite label command with dummy
  \let\MYoldlabel\label
  \let\label\MYdummylabel
  \begin{proposition*}[Restatement of \prettyref{#1}]
    \MYload{#1}
  \end{proposition*}
  \let\label\MYoldlabel
}

\newcommand{\restatefact}[1]{%
  % overwrite label command with dummy
  \let\MYoldlabel\label
  \let\label\MYdummylabel
  \begin{fact*}[Restatement of \prettyref{#1}]
    \MYload{#1}
  \end{fact*}
  \let\label\MYoldlabel
}

\newcommand{\restate}[1]{%
  % overwrite label command with dummy
  \let\MYoldlabel\label
  \let\label\MYdummylabel
  \MYload{#1}
  \let\label\MYoldlabel
}

% {{{ bibliography }}}

% add section for references to table of contents

% {{{ mathabbreviations }}}

% {{{ paragraphperiod }}}

\let\origparagraph\paragraph
\renewcommand{\paragraph}[1]{\origparagraph{#1.}}

% {{{ allowdisplaybreaks }}}
% allows page breaks in large display math formulas

\allowdisplaybreaks

% {{{ sloppy }}}
% avoid math spilling on margin

\sloppy

% {{{ complexityclasses }}}

% abbreviations and paper-specific macros

\DeclareMathOperator*{\con}{\cC}
\DeclareMathOperator*{\out}{\mathrm{O}}
\DeclareMathOperator*{\inn}{\mathrm{I}}

\DeclareMathOperator*{\low}{\mathrm{low}}

\newcommand{\hide}[1]{}

\setcounter{page}{1}

\title{\bfseries Approximating ATSP by Relaxing Connectivity}

\author{%
Ola Svensson \\
EPFL\\
\textit{ola.svensson@epfl.ch}
}

%\date{}

\begin{document}

\maketitle
\draftbox
\thispagestyle{empty}

\begin{abstract}
  The standard LP relaxation of the asymmetric traveling salesman problem has
  been conjectured to have a constant integrality gap in the metric case. We
  prove this conjecture when restricted to shortest path metrics of
  node-weighted digraphs. Our arguments are constructive and give a constant
  factor approximation algorithm for these metrics.  We remark that the
  considered case is more general than the directed analog of the special case of the symmetric traveling salesman problem
   for which there were recent improvements on Christofides' algorithm.

  The main idea of our approach is to first consider an easier problem obtained
  by significantly relaxing the general connectivity requirements into local
  connectivity conditions.  For this relaxed problem, it is  quite easy to give
  an algorithm with a guarantee of $3$ on node-weighted shortest path metrics.
  More surprisingly, we then show that \emph{any} algorithm (irrespective of
  the metric) for the relaxed problem can be turned into an algorithm for the
  asymmetric traveling salesman problem  by only losing a small constant factor
  in the performance guarantee. This leaves open the intriguing task of 
  designing a ``good'' algorithm for the relaxed problem on general metrics. 

\end{abstract}

\medskip
\noindent
{\small \textbf{Keywords:}
approximation algorithms, asymmetric traveling salesman problem, combinatorial optimization, linear programming
}

\clearpage

% tableofcontents added for better navigability of the document
\ifnum\showtableofcontents=1
{
\tableofcontents
\thispagestyle{empty}
 }
\fi

\clearpage

\section{Introduction}
The traveling salesman problem is one of the most fundamental combinatorial
optimization problems. Given a set $V$ of $n$ cities and a distance/weight
function $w: V\times V \rightarrow \R^+$, it is the problem  of finding a tour
of minimum total weight that visits each city exactly once.  There are two
variants of this general definition: the \emph{symmetric} traveling
salesman problem (STSP) and the \emph{asymmetric} traveling salesman problem (ATSP).
In the symmetric version we assume $w(u,v)=w(v,u)$ for each pair $u,v\in V$ of
cities; whereas we make no such assumption in the more general asymmetric
traveling salesman problem.

In both versions, it is common to assume the triangle inequality and we shall
do so in the rest of this paper. Recall that the triangle inequality says that
for any triple $i,j, k$ of cities, we have $w(i,j) + w(j,k) \geq w(i,k)$.  In
other words, it is not more expensive to take the direct path compared to
a path that makes a detour.  Another equivalent view of the triangle inequality
is that, instead of insisting that each city is visited exactly once, we should
find a tour that visits each city \emph{at least} once.  These assumptions are
arguably natural in many, if not most, settings. They are also necessary in the
following sense: any reasonable approximation algorithm (with approximation
guarantee $O(\exp(n))$) for the traveling salesman problem without the triangle
inequality would imply $P=NP$ because it would solve the problem of deciding
Hamiltonicity.

Understanding the approximability of the symmetric and the asymmetric traveling
salesman problem (where we have the triangle inequality) turns out to be
a much more interesting and notorious problem.
On the one hand, the strongest known  inapproximability results,
by Karpinski, Lampis, and Schmied~\cite{KarpinskiLS13}, say that it is NP-hard
to approximate STSP within a factor of $123/122$ and that it is NP-hard to
approximate ATSP within a factor of $75/74$.  On the other hand, the current
best approximation algorithms are far from these guarantees, especially in the
case of ATSP.

For the symmetric traveling salesman problem, Christofides' beautiful algorithm
from 1976 sill achieves the best known approximation guarantee of $1.5$~\cite{Ch76}.
However, a recent series of papers~\cite{GharanSS11,MomkeS11,Mucha12,SeboV14},
broke this barrier for the interesting special case of  shortest path metrics
of unweighted undirected graphs\footnote{The shortest path  metric of a graph $G= (V,E)$ is defined 
as follows: the weight $w(u,v)$ between cities
$u,v\in V$ equals the shortest path between $u$ and $v$ in $G$. If the graph is
node-weighted $f: V \rightarrow \R^+$, the weight/length  of an edge $\{u,v\}\in E$  is $f(u) + f(v)$.}.
Specifically, Oveis Gharan, Saberi, and Singh~\cite{GharanSS11} first gave an
approximation guarantee of $1.5-\epsilon$; M\"omke and 
Svensson~\cite{MomkeS11} proposed a different approach yielding
a $1.461$-approximation guarantee; Mucha~\cite{Mucha12} gave a tighter analysis
of this algorithm; and Seb\"{o} and Vygen~\cite{SeboV14} significantly
developed the approach to give the current best approximation guarantee of
$1.4$.

The interest in shortest path metrics has several motivations. It is a natural
special case that seems to capture the difficulty of the problem: it remains APX-hard
and the worst known integrality gap for the Held-Karp relaxation is of this
type. Moreover, it has an attractive graph theoretic formulation: given an
unweighted graph,  find a shortest tour that visits each vertex at least once.
This is the (unweighted) ``graph'' analog of STSP. Indeed, if allow the graph
to be edge-weighted, this formulation is equivalent to STSP on general metrics.
Let us also mention that the polynomial time approximation scheme for the
symmetric traveling salesman problem on  planar graphs was first obtained for
the special case of unweighted graphs~\cite{GrigniKP95}, i.e., when restricted
to shortest path metrics of unweighted graphs,  and then generalized to the
case of edge-weights~\cite{AroraGKKW98}. For STSP, it remains  a major open
problem whether the ideas in~\cite{GharanSS11,MomkeS11,Mucha12,SeboV14} can be
applied to general metrics.  We further discuss this in
Section~\ref{sec:discuss}.

The gap in our understanding is much larger for the asymmetric
traveling salesman problem for which  it remains a notorious open problem to design
an algorithm with \emph{any} constant approximation guarantee. This is a particularly intriguing as
the standard linear programming relaxation, often referred to as the Held-Karp
relaxation, is only known to have an integrality gap of at least $2$~\cite{CharikarGK06}. There are in
general  two available approaches for designing approximation algorithms for
ATSP in the literature. The first approach is due to Frieze, Galbiati, and
Maffiolo~\cite{FriezeGM82} who gave a $\log_2(n)$-approximation
algorithm for ATSP already in 1982. Their basic idea is
simple and elegant: a minimum weight cycle cover has weight at most that of an optimal tour
and it will decrease the number of connected components by a factor of at least
$2$. Hence, if we repeat the selection of a minimum weight cycle cover
$\log_2(n)$ times, we get a connected Eulerian graph which (by shortcutting) is
a $\log_2(n)$-approximate tour. Although the above analysis is tight only
in the case when almost all cycles in the cycle covers have length $2$, it is
highly non-trivial to refine the method to decrease the number of iterations.
It was first in 2003 that Bl\"{a}ser~\cite{Blaser08} managed to give an
approximation guarantee of $0.999\log_2(n)$. This was improved shortly
thereafter by  Kaplan, Lewenstein, Shafrir and Sviridenko~\cite{KaplanLS05} who
further developed this approach to obtain a $4/3\log_3(n) \approx 0.84
\log_2(n)$-approximation algorithm; and later by Feige and Singh~\cite{FeigeS07} who obtained an approximation guarantee of $2/3\log_2(n)$.

A second approach was more recently proposed in an influential and beautiful paper by
Asadpour, Goemans, Madry, Oveis Gharan, and Saberi~\cite{AsadpourGMGS10} who
gave an $O(\log n/\log \log n)$-approximation algorithm for ATSP.
Their approach is based on finding a so-called $\alpha$-thin spanning tree
which is a (unweighted) graph theoretic problem.  Here, the parameter $\alpha$
is proportional to the approximation guarantee so $\alpha = O(\log n/\log \log
n)$ in~\cite{AsadpourGMGS10}.  Following their publication, Oveis Gharan and
Saberi~\cite{GharanS11} gave an efficient algorithm for finding $O(1)$-thin spanning trees for
planar and bounded genus graphs yielding a constant factor approximation
algorithm for ATSP on these graph classes.  Also, in a very recent major
progress, Anari and Oveis Gharan~\cite{AnariG14} showed the existence of  $O(\textrm{polylog} \log n)$-thin
spanning trees for general instances. This implies a $O(\textrm{polylog} \log n)$ upper
bound on the integrality gap  of the Held-Karp
relaxation.  Hence, it gives an efficient so-called estimation algorithm for
estimating the optimal value of a tour within a factor $O(\textrm{polylog} \log n)$ but, as
their arguments are non-constructive,  no approximation algorithm for finding
a tour of matching guarantee. The result
in~\cite{AnariG14} is based on developing and extending several advanced techniques.
Notably, they rely on their extension~\cite{AnariG14a} of the recent proof of the
Kadison-Singer conjecture which was a  major breakthrough  by Marcus, Spielman,
and Srivastava~\cite{MSS13}.

To summarize, the current best approximation algorithm has a guarantee of
$O(\log n/\log\log n)$~\cite{AsadpourGMGS10} and the best upper bound on the
integrality gap of the Held-Karp relaxation is $O(\textrm{polylog}
\log n)$~\cite{AnariG14}. These two bounds are far away from the known
inapproximability results~\cite{KarpinskiLS13} and from the lower bound of $2$
on the integrality gap of the Held-Karp relaxation~\cite{CharikarGK06}.
Moreover, there were no better approximation algorithms known  in the case
of shortest path metrics of unweighted digraphs for which there was recent
progress in the undirected setting. In particular, it is not clear how to use
the two available approaches mentioned above to get an improved approximation guarantee  in this case:
in the cycle cover approach, the main difficulty is to bound the number of
iterations  and, in the thin spanning tree approach, ATSP is reduced to an
unweighted graph theoretic problem.

\subsection{Our Results and Overview of Approach}
We propose a new approach for approximating the asymmetric traveling salesman
problem based on relaxing the global connectivity constraints into local
connectivity conditions.   We also use this approach to obtain the following
result where we refer to ATSP on shortest path metrics of node-weighted digraphs
as Node-Weighted ATSP.
\begin{theorem}
  \label{thm:constant}
  There is a constant approximation algorithm for Node-Weighted ATSP.
  Specifically, for Node-Weighted ATSP, the integrality gap of the Held-Karp
  relaxation is at most $15$ and, for any $\epsilon>0$, there is a polynomial
  time algorithm that finds a tour of weight at most $(27+\epsilon)\OPT_{HK}$
  where $\OPT_{HK}$ denotes the optimal value of the Held-Karp relaxation.
\end{theorem}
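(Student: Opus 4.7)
The plan is to decouple ATSP into two independent parts: a \emph{local-connectivity relaxation} LC-ATSP, which is weaker than ATSP because it drops the global cut constraints, together with a black-box reduction that converts any approximation for LC-ATSP into an approximation for ATSP while losing only a small constant factor. The quoted bounds ($15$ on the integrality gap, $(27+\epsilon)$ on the algorithm) would then arise from multiplying the guarantee of $3$ for the node-weighted LC-ATSP by the blow-up factor of the reduction, plus a small additional loss to convert an existence argument into a polynomial-time algorithm.

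Formally, LC-ATSP would ask, given a feasible Held--Karp solution $x^*$, for a non-negative integral arc vector $z$ that is Eulerian at every vertex and that satisfies local lower bounds on each $z_{\text{in}}(v)$ derived from $x^*$, but whose support need not be connected. Dropping connectivity is what makes LC-ATSP tractable; indeed it is essentially a min-cost flow problem up to integrality and parity corrections. For node-weighted metrics, the key observation is that with node weights $f$ the cost of any Eulerian $z$ equals $\sum_v f(v)(z_{\text{in}}(v)+z_{\text{out}}(v)) = 2\sum_v f(v) z_{\text{in}}(v)$, so the cost is a linear function of the in-degrees alone. It then suffices to round $x^*$ so that each $z_{\text{in}}(v)$ stays within a small constant factor of $x^*_{\text{in}}(v)$; a natural randomized rounding, followed by a min-cost circulation that restores the Eulerian condition and the lower-bound demands, should give the factor~$3$.

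The main conceptual step is the black-box reduction: any $\alpha$-approximation for LC-ATSP yields an $O(\alpha)$-approximation for ATSP on arbitrary metrics. I would approach this iteratively: solve LC-ATSP to obtain an Eulerian but possibly disconnected multigraph $F$; then on a suitably contracted instance whose LP solution is induced by $x^*$, solve further LC-ATSP-type problems that merge the strongly connected components of $F$. A cycle-cover-style recursion would give only $O(\alpha \log n)$, so the crux is to design LC-ATSP and the reduction so that either a constant number of rounds suffices, or the total cost over all rounds telescopes against $\OPT_{HK}$.

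The hardest part will be precisely this last point: controlling the component-merging process without incurring a $\log n$ blow-up, as in Frieze--Galbiati--Maffioli. Achieving this likely requires LC-ATSP's local constraints to be strong enough that each strongly connected component of an LC-ATSP solution already spans enough LP mass for further merging to be cheap against $x^*$, and an analysis in which the recursive LC-ATSP invocations collectively cost $O(\OPT_{HK})$ rather than $\OPT_{HK}$ per level. Getting this telescoping argument to go through, and separately turning the resulting existential integrality-gap bound into a polynomial-time rounding (which is what accounts for the gap between $15$ and $27+\epsilon$), is where I expect the bulk of the technical work to lie.
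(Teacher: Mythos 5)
Your high-level plan (relax connectivity to a local condition, give a constant for the node-weighted relaxation, then a black-box reduction back to ATSP, with $15=5\cdot 3$ and $27=9\cdot 3$) matches the paper's architecture, and your observation that node-weighted cost decomposes linearly over vertex degrees is indeed why the node-weighted case is tractable. But there are two genuine gaps, and they are exactly the places where you yourself flagged uncertainty.

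First, your formulation of the local-connectivity problem is not the one that makes the reduction go through. You ask for a Eulerian integral $z$ satisfying local lower bounds on in-degrees at each vertex. The paper's Local-Connectivity ATSP is parameterized by a \emph{partition} $V_1,\ldots,V_k$ of the vertices (each inducing a strongly connected subgraph) and requires the Eulerian multiset $F$ to cross every cut $(V_i, V\setminus V_i)$; the quality measure is the \emph{per-connected-component} ratio $\max_{\tilde G\in\con(F)} w(\tilde G)/\lb(\tilde G)$, where $\lb(v)=\sum_{e\in\delta^+(v)}x^*_e w(e)$. Your per-vertex degree bounds do imply per-component lightness in the node-weighted case (precisely because cost decomposes over vertex degrees), so they happen to give a $3$-light algorithm there; but for general metrics per-vertex degree bounds do not control per-component weight, and without the partition-crossing requirement a solver could return $n/2$ disjoint $2$-cycles that never help connectivity. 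The partition-crossing requirement and per-component objective are essential to the black-box reduction.

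Second, and more importantly, you correctly identify the crux — avoiding a Frieze--Galbiati--Maffioli-style $\log n$ blow-up when merging components — but the mechanism you sketch (contract and re-solve on induced LP solutions) is not what the paper does, and you do not supply an alternative charging argument. The paper never contracts. It fixes once and for all a Eulerian partition $H^*_1,\ldots,H^*_k$ chosen to maximize the lexicographic order of $\langle \lb(H^*_1),\ldots,\lb(H^*_k)\rangle$ among $2\alpha$-light Eulerian partitions, and then runs a merge procedure on the original instance: call $\cA$ with the current components as the partition, add cheap connecting cycles to $X$, and commit only the edges of the single currently-``smallest'' merged component (smallest by $\lb(\low(\cdot))$). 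The telescoping you were hoping for comes from two charging lemmas enabled by the lex-max choice: (i) each $H^*_i$ is ``marked'' by at most one cycle of $X$, of weight at most $\alpha\lb(H^*_i)$; (ii) the components of each $\tilde F_t$ grouped by their $\low$ index $i$ satisfy $\lb(\cF^t_i)\le 2\lb(H^*_i)$ and, crucially, $\cF^t_i\neq\emptyset$ for at most one $t$. Together with the initial $2\alpha$-light partition cost this gives $5\alpha\lb(V)$. The polynomial-time version replaces the uncomputable lex-max partition by a restart-with-potential argument (a knapsack step increases $\sum_i \lb(H^*_i)^2$ by $\Omega(\varepsilon^2\lb(V)^2/n^2)$ on each violation), which accounts for the weakening from $5\alpha$ to $(9+\epsilon)\alpha$. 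Without the partition-crossing objective, the lex-max initialization, and the marking/low-index charging, there is no apparent way to get your recursion to cost $O(\OPT_{HK})$ total, and your proposal does not provide one.
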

As further discussed in Section~\ref{sec:discuss}, the constants in the theorem
can be slightly improved by specializing our general approach to the
node-weighted case. However, it remains an interesting open problem to give
a tight bound on the integrality gap.

Let us continue with a brief overview of our approach that is not restricted to
the node-weighted version. It is illustrative to consider the following
``naive'' algorithm that actually was the starting point of this work:
\begin{enumerate}
  \item Select a random cycle cover $C$ using the Held-Karp relaxation.% \\[-2mm]
    
    It is well known that one can  sample such a cycle cover $C$ of   expected
    weight equal to the optimal value $\OPT_{HK}$  of the
    Held-Karp relaxation.
  \item While there exist more than one component, add the lightest cycle (i.e., the cycle of smallest weight) that
    decreases the number of components. 
\end{enumerate}
It is clear that the above algorithm always returns a solution to ATSP: we
start with a Eulerian graph\footnote{Recall that a directed graph is Eulerian if the in-degree equals the out-degree of each vertex.} and the graph stays Eulerian during the execution
of the while-loop which does not terminate until the graph is connected. This
gives a tour that visits each vertex at least once and hence a solution to ATSP
(using that we have the triangle-inequality). 
However, what is the weight of the obtained tour? First, as remarked above, we
have that the expected weight of the cycle cover is $\OPT_{HK}$. So if $C$
contains $k =|C|$ cycles, we would expect that a cycle in $C$ has weight
$\OPT_{HK}/k$ (at least on average).  Moreover,  the number of cycles added in
Step~$2$ is at most $k-1$ since each cycle decreases the number of components
by at least one.  Thus, if each cycle in Step~$2$ has weight at most  the average
weight $\OPT_{HK}/k$ of a cycle in $C$,  we obtain a $2$-approximate tour
of weight at most $\OPT_{HK} + \frac{k-1}{k} \OPT_{HK} \leq 2 \OPT_{HK}$.

Unfortunately, it seems hard to find a cycle cover $C$ so that we can always
connect it with light cycles. Instead, what we can do, is to first select
a cycle cover $C$ then add light cycles that decreases the number of components
as long as possible.  When there are no more light cycles to add, the
vertices/cities are partitioned into $V_1, \ldots, V_k$ connected components.
In order to make progress from this point, we would like to find a ``light''
Eulerian set $F$ of edges that crosses  the cuts $\{(V_i, \bar V_i)\mid i=1,2,
\ldots, k\}$. We could then hope to add $F$ to our solution and continue from
there. It turns out that it is very important what ``light'' means in this
context.  For our arguments to work, we need that $F$ is selected so that  the
weight of the edges in each component has weight at most $\alpha$ times what
the linear programming solution ``pays'' for the vertices in that component. This is the intuition behind
the definitions in Section~\ref{sec:LCdef} of Local-Connectivity ATSP and
$\alpha$-light algorithms for that problem. We also need to be very careful in
which way we add edges from light cycles and how to use the $\alpha$-light
algorithm for Local-Connectivity ATSP. In Section~\ref{sec:LocalToGlobal}, our
algorithm will iteratively solve the Local-Connectivity ATSP and, in each
iteration, it will  add a carefully chosen subset of the found edges together with
light cycles.

We remark that in Local-Connectivity ATSP we have relaxed the global
connectivity properties of ATSP into local connectivity conditions that only
say that we need to find a Eulerian set of edges that crosses at most $n=|V|$
cuts defined  by a partitioning of the vertices. In spite of that, we are able
to leverage the intuition above to obtain our main technical result:
\begin{theorem*}[Simplified statement of Theorem~\ref{thm:LoctoGlo}]
  The integrality gap of the Held-Karp relaxation is at most $5\alpha$ if there exists an $\alpha$-light algorithm $\cA$ for Local-Connectivity ATSP. Moreover, for any $\epsilon > 0$, we can  find a $(9+\epsilon) \alpha$-approximate tour in time polynomial in $n, 1/\epsilon$, and in the running time of $\cA$.
\end{theorem*}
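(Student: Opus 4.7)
The plan is to implement the iterative scheme sketched in the overview while being very careful about which edges get added in each round. We begin by sampling a random cycle cover $C$ via the Held-Karp relaxation so that $\E[w(C)] = \OPT_{HK}$; this is our initial (Eulerian) solution. At each subsequent iteration, we look at the partition $V_1,\ldots,V_k$ induced by the connected components of the current solution. Two kinds of moves are available: first, greedily add any ``light'' cycle---one whose weight is small compared to the LP mass of the vertices it touches---that decreases the number of components; second, once no light cycle can be found, invoke $\cA$ on the Local-Connectivity ATSP instance defined by the current partition, obtaining a Eulerian set $F$ that crosses every cut $(V_i,\bar V_i)$ and satisfies the local guarantee $w(F\cap E(V_i))\le\alpha\cdot x^*(V_i)$, where $x^*(V_i)$ denotes the LP mass on $V_i$.

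The crucial technical step---and the main obstacle---is to not simply add all of $F$ in every iteration: doing so would cost up to $\alpha\cdot\OPT_{HK}$ per round and yield only an $O(\alpha\log n)\cdot\OPT_{HK}$ bound after $\Theta(\log n)$ iterations. Instead, one extracts from $F$ (together with the current solution) a carefully chosen Eulerian subgraph whose addition merges several components while charging only to the LP mass on those components that actually get merged in this round. The key observation is that a connected Eulerian piece of $F$ touching components $V_{i_1},\ldots,V_{i_j}$ has weight at most $\alpha\sum_{t} x^*(V_{i_t})$, and after shortcutting it decreases the component count by $j-1$; a per-merge accounting therefore lets us charge its weight to the LP mass of the vanishing components. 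Interleaving these pieces with the (already free) light-cycle phase and telescoping the charges across all rounds gives a total added weight of $O(\alpha)\cdot\OPT_{HK}$, and a careful optimization of the charging constants yields the integrality-gap bound $5\alpha$.

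The gap between the $5\alpha$ integrality-gap bound and the $(9+\epsilon)\alpha$ algorithmic guarantee comes from the cost of making the procedure constructive. The $5\alpha$ bound exploits averaging over the random cycle cover (bounding $\E$ of the tour weight against $\OPT_{HK}$) and may use the LP structure non-constructively, whereas the algorithmic version must commit to a specific sample and discretize the thresholds that define ``light'' cycles so that the whole scheme runs in time polynomial in $n$ and $1/\epsilon$. Each such relaxation loses a small constant factor, and the discretization is exactly the source of the additive $\epsilon$. The single hardest part is the book-keeping: showing that a purely \emph{local} guarantee from $\cA$---which makes no claim about global connectivity---can, through $O(\log n)$ rounds of controlled merging, be parlayed into a globally connected Eulerian multigraph whose total weight is within a constant factor of $\OPT_{HK}$, without ever paying $\alpha\cdot\OPT_{HK}$ more than a constant number of times overall.
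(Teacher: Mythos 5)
Your sketch captures the right top-level shape (alternate between invoking $\cA$ and adding light cycles; only add a carefully chosen subset of $F$; charge locally rather than globally) but it is missing the one idea that actually makes the charging close, and several of your explanatory claims are wrong in ways that would sink a formal write-up.

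The central gap: your ``per-merge accounting'' and ``telescoping the charges across all rounds'' are stated as if they were self-evident, but nothing in your proposal prevents the same component from being charged $\Theta(\log n)$ times, which is exactly what gives the classical $O(\log n)$ bound you say you are trying to beat. The paper's fix is structural, not accounting folklore: it initializes not with a random cycle cover but with a $2\alpha$-light \emph{Eulerian partition} $H_1^*,\dots,H_k^*$ chosen to \emph{maximize the lexicographic order} of $\langle \lb(H_1^*),\dots,\lb(H_k^*)\rangle$. This maximality is what forces the two key inequalities (Claims~\ref{claim:boundlex1} and~\ref{claim:boundlex2}): any Eulerian piece $H$ returned by $\cA$ has $\lb(H)\le\lb(\low(H))$, and the pieces whose lowest-index intersection is $H_i^*$ have total $\lb$ at most $2\lb(H_i^*)$. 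Coupled with the careful rule of adding only the edges of the component $\tilde G$ that \emph{minimizes} $\lb(\low(\tilde G))$, these facts yield the crucial ``each $H_i^*$ is paid for at most once'' statement, via a contradiction that exhibits a cycle that should already have been added in an earlier iteration. Without the lex-max initialization there is no reason the charges stop accumulating, and your argument does not supply an alternative mechanism.

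Two more concrete errors. First, the light cycles added in the greedy phase are \emph{not} free; they contribute a term $w(\cup_t \tilde X_t)$ that has to be bounded separately (by $\alpha\lb(V)$, again using a ``each $H_i^*$ is marked at most once'' argument tied to the $\low(\cdot)$ ordering). Second, your explanation of the $5\alpha$ vs.\ $(9+\epsilon)\alpha$ split is off the mark: the existential $5\alpha$ bound has nothing to do with averaging over a random cycle cover, and the $\epsilon$ is not a discretization artifact. The true obstacle is that a lex-maximal Eulerian partition is not known to be computable in polynomial time. The algorithmic version therefore works with an arbitrary (eventually trivial) $3\alpha$-light partition, \emph{checks} during each merge step whether a relaxed form of Claim~\ref{claim:boundlex2} (namely $\lb(\cF^t_i)\le 3\lb(H_i^*)+\epsilon\lb(V)/n$) holds, and if not, uses a knapsack LP to construct a new $3\alpha$-light partition that strictly increases a potential function $\sum_j\lb(H_j^*)^2$ by $\Omega(\epsilon^2\lb(V)^2/n^2)$, bounding the number of restarts. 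That potential-function restart mechanism is the heart of the polynomial-time result and is absent from your proposal. Also, the merge procedure repeats at most $k\le n$ times, not $O(\log n)$ times; the bound comes from charging, not from halving the component count.
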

The proof  of the above theorem (Section~\ref{sec:LocalToGlobal}) is based on
generalizing  and, as alluded to above, deviating from the above intuition in several ways. First, we  start with
a carefully chosen ``Eulerian partition'' which  generalizes the role of the
cycle cover $C$ in Step~1 above. Second, both the iterative use of the $\alpha$-light
algorithm for Local-Connectivity ATSP and the way we add light cycles  are 
done  in a careful and dependent manner so as to be able to bound the total
weight of the returned solution.
Theorem~\ref{thm:constant} follows from Theorem~\ref{thm:LoctoGlo} together
with  a $3$-light algorithm for Node-Weighted Local-Connectivity ATSP. The
$3$-light algorithm, described in Section~\ref{sec:lcATSPalg}, is a rather
simple application of classic theory of flows and circulations. We also remark
that it is the only part of the paper that relies on having shortest path
metrics of node-weighted digraphs. 

Our work raises several natural questions. Perhaps the most immediate and
intriguing question is whether there is a $O(1)$-light algorithm for
Local-Connectivity ATSP on general metrics. We further elaborate on this and
other related questions in Section~\ref{sec:discuss}.

\section{Preliminaries}

\subsection{Basic Notation}

Consider a directed graph $G=(V,E)$. For a subset $S \subseteq V$, we let $\delta^+(S)
= \{(u,v) \in E\mid u\in S, v\not \in S\}$ be the outgoing edges and  we let $\delta^-(S) = \{(u,v)\in E\mid
u\not \in S, v\in S\}$ be the incoming edges of the cut defined by $S$. When
considering a subset $E'\subseteq E$ of the edges, we denote the restrictions
to that subset by $\delta^+_{E'}(S) = \delta^+(S) \cap E'$ and by
$\delta^-_{E'}(S) = \delta^-(S) \cap E'$. We also let
$\con(E') = \{\tilde G_1 = (\tilde V_1, E_1), \tilde G_2 = (\tilde V_2,\tilde E_2), \ldots,\tilde G_k = (\tilde V_k,\tilde E_k)\}$
denote the set of subgraphs corresponding to the $k$ connected components of
the graph $(V,E')$; the vertex set $V$ will always be clear from the context.
Here connected means that the subgraphs are connected if we undirect the edges. 

When considering a function $f: U \rightarrow \R$, we let $f(X) = \sum_{x\in X}
f(x)$ for  $X\subseteq U$. For example, if $G$ is edge weighted, i.e., there
exists a function $w: E \rightarrow \R$, then $w(E')$ denotes the total weight
of the edges in $E' \subseteq E$. Similarly, if $G$ is node-weighted, then
there exists a function $f: V \rightarrow  \R$ and $f(S)$ denotes the total
weight of the vertices in $S \subseteq V$. When talking about graphs, we shall
slightly abuse notation and sometimes write $w(G)$ instead of $w(E)$ and $f(G)$
instead of $f(V)$ when it is clear from the context that $w$ and $f$ are
functions on the edges and vertices. Finally, our \emph{subsets of
edges are multisets}, i.e., may contain the same edge several times. The set
operators $\cup, \cap, \setminus$ are defined in the natural way. For example,
$\{e_1, e_1, e_2\} \cup \{e_1, e_2\}  = \{e_1,e_1,e_1, e_2, e_2\}, \{e_1,e_1,
e_2\} \cap \{e_1, e_2\} = \{e_1, e_2\}$, and $\{e_1, e_1, e_2\} \setminus
\{e_1, e_2\} = \{e_1\}$. Other sets, such as subsets of vertices, will always
be simple sets without any multiplicities.

\subsection{The (Node-Weighted) Asymmetric Traveling Salesman Problem}

It will be convenient to define ATSP using the Eulerian point of view, i.e., we
wish to find a tour that visits each vertex at least once. As already mentioned
in the introduction, this definition is equivalent to that of visiting each city exactly once (in the metric completion) since we assume the triangle
inequality.
\begin{mdframed}
\begin{center} \textbf{ATSP} \end{center}
\begin{description}
  \item[\textnormal{\emph{Given:}}]An edge-weighted (strongly connected) digraph $G=(V,E, w: E \rightarrow \R^+)$.  
	
\item[\textnormal{\emph{Find:}}]A connected Eulerian digraph $G' = (V,E')$
  where $E'$ is a multisubset of $E$ that 
  minimizes $w(E')$. 
\end{description}
\end{mdframed}
\vspace{2mm}

Similar to the  recent progress on STSP, it is natural to consider
special cases that are easier to argue about but at the same time 
capture the combinatorial structure of the problem. In particular, we shall consider the
\emph{Node-Weighted} ATSP, where  we assume that there exists a weight function
$f: V\rightarrow R^+$ on the vertices so that $w(u,v) = f(u)$.  (Another
equivalent definition, which also applies to undirected graphs, is to let the
weight of an edge $(u,v)$ be $f(u)+f(v)$. This is equivalent to the definition
above, if scaled down by a factor of $2$, since the solutions are Eulerian.)

Note that this generalizes ATSP on shortest path metrics of unweighted
digraphs: that is the problem where $f$ is the constant function.  As
a curiosity, we also note that the recent progress on STSP when restricted to
shortest path metrics of unweighted  graphs is not known to generalize to the
node-weighted case. We raise this as an interesting open problem in
Section~\ref{sec:discuss}.

\subsection{Held-Karp Relaxation}

The Held-Karp relaxation has a variable $x_e \geq 0$ for every edge in the given edge-weighted graph $G = (V,E,w)$. The intended solution is that $x_e$ should equal the number of times $e$ is used in the solution. The relaxation $\LP(G)$ is now defined as follows:

\begin{align*}
\arraycolsep=1.4pt\def\arraystretch{1.2}
\begin{array}{lrlr}
\mbox{minimize} \qquad & \displaystyle \sum_{e\in E} x_e w(e) \\[7mm]
\mbox{subject to} \qquad  & \displaystyle x(\delta^+(v)) = & \displaystyle x(\delta^-(v)) & v\in V, \\  
& \displaystyle x(\delta^+(S)) \geq & 1 & \emptyset \neq S \subset V, \\
& x \geq & 0.
\end{array}
\end{align*}
The first set of constraints says that the in-degree should equal the out-degree for each vertex, i.e., the solution should be Eulerian. The second set of constraints enforces that the solution is connected and they are sometimes referred to as subtour elimination constraints.
For Node-Weighted ATSP, we can write the objective function of the linear program as $\sum_{v\in V} f(v) \cdot x(\delta^+(v))$, where $f: V \rightarrow \R^+$ is the weights  on the vertices that defines the node-weighted metric.

Finally, we remark that although the Held-Karp relaxation has exponentially many constraints, it is well known that we can solve it in polynomial time either by using the ellipsoid method with a separation oracle or by formulating an equivalent compact (polynomial size) linear program.

\section{ATSP with Local Connectivity}
\label{sec:LCdef}

In this section we define a seemingly  easier problem than ATSP by relaxing the
connectivity requirements.  
Consider an optimal solution $x^*$ to $\LP(G)$. Its value, which is a lower
bound on $\OPT$,  can be decomposed into a  ``lower bound'' for each vertex $v$:  
\begin{align*}
  \sum_{e\in E} x^*_e w(e) = \sum_{v\in V}  \underbrace{\sum_{e\in \delta^+(v)} x^*_e w(e).}_{\mbox{lower bound for $v$}}
\end{align*}

With this intuition,  we let $\lb: V \rightarrow \R$ be the lower bound function defined by $\lb_{x^*, G}(v) = \sum_{e\in \delta^+(v)} x^*_e w(e)$. 
We simplify notation and write $\lb$ instead of $\lb_{x^*, G}$ as $G$  will always be clear from the context and therefore also $x^*$ (if the optimal solution to $\LP(G)$ is not unique then make an arbitrary but consistent choice). Note that $\lb(V)$ equals the value of the optimal solution to the Held-Karp relaxation.

Perhaps the main difficulty of ATSP is to satisfy the connectivity requirement,
i.e., to select a Eulerian subset $F$ of edges that connects the whole graph. 
We shall now relax this condition to obtain what we call
\emph{Local-Connectivity} ATSP:
\newpage
\begin{mdframed}
\begin{center} \textbf{Local-Connectivity ATSP} \end{center}
\begin{description}
  \item[\textnormal{\emph{Given:}}]An edge-weighted (strongly connected)
	digraph $G=(V,E, w)$ and   a partitioning $V_1 \cup V_2 \cup \ldots \cup V_k$
	of the vertices that satisfy:  the graph induced by $V_i$ is strongly
	connected for $i=1,\ldots, k$.   
	
\item[\textnormal{\emph{Find:}}]A Eulerian multisubset $F$ of $E$   such
that
\begin{align*}
  |\delta^+_{F}(V_i)| \geq 1 \mbox{ for } i=1, 2, \ldots, k \qquad \mbox{and}
  \qquad \max_{\tilde G\in \con(F)} \frac{w(\tilde G)}{\lb(\tilde G)} \mbox{ is minimized.}
\end{align*}
\end{description}
\end{mdframed}
\vspace{2mm}
Recall that $\con(F)$ denotes the set of connected components of the graph
$(V,F)$. We remark that the restriction that each $V_i$ should induce a
strongly connected component is not necessary but it makes our proofs in
Section~\ref{sec:lcATSPalg} easier.

We say that an algorithm for Local-Connectivity ATSP is 
\emph{$\alpha$-light} if it is guaranteed (over all instances) to find a solution $F$
such that 
\begin{align}
  \label{eq:approx}
  \max_{\tilde G\in \con(F)} \frac{w(\tilde G)}{\lb(\tilde G)} \leq
\alpha.
\end{align}
We also say that an algorithm is $\alpha$-light on an ATSP instance
$G=(V,E,w)$ if, for each partitioning $V_1 \cup \ldots \cup V_k$ of $V$ (such
that $V_i$ induces a strongly connected graph), it returns a solution
satisfying~\eqref{eq:approx}. We remark that we use the $\alpha$-light
terminology to avoid any ambiguities with the concept of approximation
algorithms because  an $\alpha$-light algorithm does not compare its solution
with respect to an optimal solution  to the given instance of Local-Connectivity ATSP. 

An $\alpha$-approximation algorithm for ATSP with respect to the Held-Karp
relaxation is trivially an $\alpha$-light algorithm for Local-Connectivity
ATSP: output the same Eulerian subset $F$ as the algorithm for ATSP.  Since
the set $F$ connects the graph we have $\max_{\tilde G \in \con(F)} w(\tilde
G)/\lb(\tilde G) = w(F)/\lb(V) \leq \alpha$.  Moreover, {Local-Connectivity}
ATSP seems like a significantly easier problem than ATSP as the Eulerian set of
edges only needs to cross $k$ cuts formed  by a partitioning of the vertices.
We substantiate this intuition by proving,  in Section~\ref{sec:lcATSPalg},
that there exists a simple $3$-approximation for {Local-Connectivity} ATSP on shortest path metrics of 
node-weighted graphs. We refer to this case as Node-Weighted Local-Connectivity ATSP. 
Perhaps more surprisingly, we show in Section~\ref{sec:LocalToGlobal} that any  
$\alpha$-light algorithm for Local-Connectivity ATSP  can be turned into an
algorithm for ATSP with an approximation guarantee of $5\alpha$
with respect to the same  lower bound (from the Held-Karp relaxation).

\begin{remark}
  \label{rem:lb}
  Our generic reduction from ATSP to Local-Connectivity ATSP (Theorem~\ref{thm:LoctoGlo}) is robust with respect to
  the definition of $\lb$ and there are many possibilities to define such
  a lower bound. Another natural example is $\lb(v) = \sum_{e\in \delta^+(v)
  \cup \delta^-(v)} x^*_e w(e)/2$. In fact, in order to get a constant bound on
  the integrality gap of the Held-Karp relaxation, our results say that it is
  enough to find an $O(1)$-light algorithm for
  Local-Connectivity ATSP with respect to  some nonnegative $\lb$ that only needs to
  satisfy that $\lb(V)$ is at most the value of the optimal solution to the LP.
  Even more generally,  if $\lb(V)$ is at most the value of an optimal tour
  then our methods would give a similar approximation guarantee (but not with
  respect to the Held-Karp relaxation). 
\end{remark}

\section{Approximating Local-Connectivity ATSP}

\label{sec:lcATSPalg}

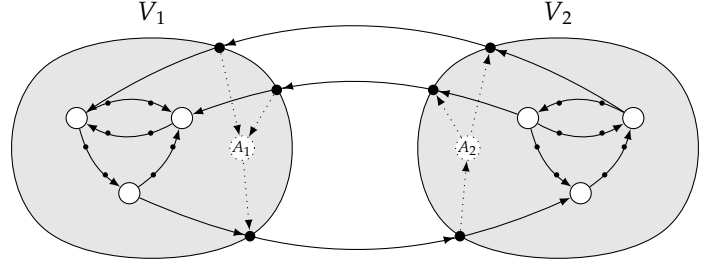
\begin{figure}[t]
\centering
\begin{tikzpicture}[scale=1]
\draw[fill=gray!20,xshift=0.3cm] plot [smooth cycle,tension=0.8] coordinates { (-4.4,-1.1) 
  (-1.6,-1.1) (-1.6,1.1) (-4.4,1.1)};
  \node at (-2.7,1.8) {\small $V_1$};

\node[svertex,fill=white] (v1) at (-2.3,0.4) {};
\node[svertex,fill=white] (v2) at (-3.7,0.4) {};
\node[svertex,fill=white] (v3) at (-3,-0.6) {};
\node[svertex, dotted, fill=white] (a1) at (-1.5, 0) {\tiny $A_1$};

\draw[fill=gray!20, xshift=-0.3cm] plot [smooth cycle,tension=0.8] coordinates { (1.6,-1.1) 
  (4.4,-1.1) (4.4,1.1) (1.6,1.1)};
%  \node at (0,0) {$V_i$};
  \node at (2.7,1.8) {\small $V_2$};

\node[svertex,fill=white] (v4) at (2.3,0.4) {};
\node[svertex,fill=white] (v5) at (3.7,0.4) {};
\node[svertex,fill=white] (v6) at (3,-0.6) {};
\node[svertex, dotted,fill=white] (a2) at (1.5, 0) {\tiny $A_2$};

\draw (v1) edge[->, bend left] node[near start,circle, fill=black, minimum size=2pt, inner sep=0pt]  {} node[near end,circle, fill=black, minimum size=2pt, inner sep=0pt] {} (v2);
\draw (v2) edge[->, bend left] node[near start,circle, fill=black, minimum size=2pt, inner sep=0pt]  {} node[near end,circle, fill=black, minimum size=2pt, inner sep=0pt] {}(v1);
\draw (v1) edge[<-, bend left=20] node[near start,circle, fill=black, minimum size=2pt, inner sep=0pt]  {} node[near end,circle, fill=black, minimum size=2pt, inner sep=0pt] {}(v3);
\draw (v2) edge[->, bend right=20] node[near start,circle, fill=black, minimum size=2pt, inner sep=0pt]  {} node[near end,circle, fill=black, minimum size=2pt, inner sep=0pt] {}(v3);

\draw (v4) edge[<-, bend left] node[near start,circle, fill=black, minimum size=2pt, inner sep=0pt]  {} node[near end,circle, fill=black, minimum size=2pt, inner sep=0pt] {}(v5);
\draw (v5) edge[<-, bend left] node[near start,circle, fill=black, minimum size=2pt, inner sep=0pt]  {} node[near end,circle, fill=black, minimum size=2pt, inner sep=0pt] {}(v4);
\draw (v4) edge[->, bend right=20] node[near start,circle, fill=black, minimum size=2pt, inner sep=0pt]  {} node[near end,circle, fill=black, minimum size=2pt, inner sep=0pt] {}(v6);
\draw (v5) edge[<-, bend left=20] node[near start,circle, fill=black, minimum size=2pt, inner sep=0pt]  {} node[near end,circle, fill=black, minimum size=2pt, inner sep=0pt] {}(v6);

\draw (v3) edge[->, bend right=24] node[near start,circle, fill=black, minimum size=4pt, inner sep=0pt] (t1) {} node[near end,circle, fill=black, minimum size=4pt, inner sep=0pt](t2) {} (v6);
\draw (v4) edge[bend right=20,->] node[near start,circle, fill=black, minimum size=4pt, inner sep=0pt] (t3) {} node[near end,circle, fill=black, minimum size=4pt, inner sep=0pt] (t4) {} (v1);
\draw (v5) edge[bend right=33, ->] node[near start,circle, fill=black, minimum size=4pt, inner sep=0pt](t5) {} node[near end,circle, fill=black, minimum size=4pt, inner sep=0pt] (t6) {} (v2);

\draw (t1) + (-0.08, 0.02) edge[->] (t1); 
\draw (t2) + (-0.08, -0.02) edge[->] (t2); 
\draw (t3) + (0.08, -0.02) edge[->] (t3); 
\draw (t4) + (0.08, 0.02) edge[->] (t4); 
\draw (t5) + (0.08, -0.03) edge[->] (t5); 
\draw (t6) + (0.08, 0.03) edge[->] (t6); 

\draw (t1) edge[<-, dotted] (a1);
\draw (t4) edge[->, dotted] (a1);
\draw (t6) edge[->, dotted] (a1);
\draw (t2) edge[->, dotted] (a2);
\draw (t3) edge[<-, dotted] (a2);
\draw (t5) edge[<-, dotted] (a2);
\end{tikzpicture}
\caption{A depiction of the construction of the auxiliary graph $G'$ (in the
  proof of Theorem~\ref{thm:lcapprox}):  edges are subdivided, an auxiliary
vertex $A_i$ is added for each partition $V_i$, and $A_i$ is ``connected'' to subdivisions of the
edges in $\delta^+(V_i)$ and $\delta^-(V_i)$.} 
\label{fig:Gaux}
\end{figure}
%The main result of this section is summarized in the following theorem.
We give a simple $3$-light algorithm for Node-Weighted Local-Connectivity ATSP.
The proof is based on finding  an integral circulation that sends flow across
the cuts $\{(V_i, \bar V_i): i=1, 2, \ldots, k\}$ and, in addition, satisfies
that the outgoing flow of each vertex $v\in V$ is at most $\lceil
x^*(\delta^+(v))\rceil + 1$ which in turn, by the assumptions on the metric,  implies
a $3$-light algorithm.
\begin{theorem}
  \label{thm:lcapprox}
  There exists a  polynomial time $3$-light algorithm for Node-Weighted Local-Connectivity ATSP.
\end{theorem}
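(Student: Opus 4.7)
My plan is to reduce the theorem to finding an integer feasible circulation in an auxiliary graph $G'$, and then invoke the total unimodularity of the circulation constraint matrix. I construct $G'$ as depicted in Figure~\ref{fig:Gaux}: each arc $e=(u,v)\in E$ is subdivided by midpoint vertices, and for each partition class $V_i$ an auxiliary vertex $A_i$ is added. For every cut arc $e$ with tail in $V_i$ and head in $V_j$ ($i\neq j$), appropriately directed auxiliary arcs are drawn from $A_i$ and to $A_j$ into the subdivision of $e$. I then impose an integer capacity $\lceil x^*(\delta^+(v))\rceil+1$ on the inner arc associated with each $v\in V$ (via the standard vertex-splitting trick) and a lower bound of one on the net flow through each $A_i$. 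The orientation of the auxiliary arcs is chosen so that one unit of flow through $A_i$ forces at least one unit of \emph{real} flow across $\delta^+(V_i)$.

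\textbf{Fractional feasibility and integer rounding.} The LP optimum $x^*$ lifts naturally to a fractional feasible circulation in $G'$: for each $e$, route $x^*_e$ along the subdivided path of $e$, and for each $A_i$, divert a fractional unit of flow through $A_i$ using the fact that $x^*(\delta^+(V_i))\geq 1$ (from the subtour elimination constraints). Hence the feasibility polytope is non-empty. Because the constraints of $G'$ form a node-arc incidence system with integer capacities and lower bounds, the constraint matrix is totally unimodular, so an integer feasible circulation exists and can be computed in polynomial time by any classical min-cost flow routine.

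\textbf{Extracting $F$ and bounding the light factor.} I read off $y_e\in\Z_{\geq 0}$ for each $e\in E$ from the integer circulation and define $F$ as the multi-subset of $E$ containing $y_e$ copies of $e$. By construction, $F$ is Eulerian on $V$, satisfies $y(\delta^+(V_i))\geq 1$ for every $i$, and $d^+_F(v)\leq \lceil x^*(\delta^+(v))\rceil+1$ for every $v\in V$. In the node-weighted case, since $w(u,v)=f(u)$, for any connected component $\tilde G=(\tilde V,\tilde E)$ of $(V,F)$ we have
\begin{align*}
w(\tilde G) \;=\; \sum_{v\in\tilde V} f(v)\, d^+_F(v) \;\leq\; \sum_{v\in\tilde V} f(v)\bigl(\lceil x^*(\delta^+(v))\rceil+1\bigr) \;\leq\; 3\sum_{v\in\tilde V} f(v)\, x^*(\delta^+(v)) \;=\; 3\,\lb(\tilde G),
\end{align*}
where the middle inequality uses $x^*(\delta^+(v))\geq 1$ (the subtour elimination constraint on the singleton $\{v\}$) together with the elementary bound $\lceil y\rceil+1\leq 3y$ valid for $y\geq 1$.

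\textbf{Main obstacle.} The technical heart of the argument is the combinatorial design of the auxiliary arcs and their capacities around each $A_i$. If they are too permissive, the unit of flow demanded at $A_i$ could be realised by a purely ``shadow'' circulation that never traverses any real cut arc, and the constraint $y(\delta^+(V_i))\geq 1$ would fail to be enforced; if they are too restrictive, the fractional lift of $x^*$ is no longer feasible. Finding the precise orientation and capacity assignment that simultaneously (i) tightly couples a unit of flow through $A_i$ to one unit of real flow across $\delta^+(V_i)$, (ii) preserves total unimodularity, and (iii) admits $x^*$ as a fractional feasible solution is the main delicate step, and is precisely what the two-midpoint subdivision of Figure~\ref{fig:Gaux} is designed to handle.
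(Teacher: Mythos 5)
Your overall architecture --- building the auxiliary graph $G'$ with subdivided arcs and auxiliary vertices $A_i$, lifting $x^*$ to a fractional circulation in $G'$, and rounding to an integer circulation $y$ via integrality of the circulation polytope --- matches the paper's proof closely. The gap is in the extraction step, where you write ``By construction, $F$ is Eulerian on $V$.'' This is false as stated. When the circulation routes its mandatory unit through $A_i$ (entering from $\inn_e$ for some $e\in\delta^-(V_i)$ and exiting to $\out_{e'}$ for some $e'\in\delta^+(V_i)$), and you then read off $F$ by taking $y_{(\out_e,\inn_e)}$ copies of each original arc $e$, the projection is no longer a circulation on $V$: the head $u$ of $e$ ends up with $|\delta^+_F(u)|-|\delta^-_F(u)|=-1$ and the tail $v$ of $e'$ with $|\delta^+_F(v)|-|\delta^-_F(v)|=+1$ (unless $u=v$). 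This is exactly because one unit of what $F$ counts as ``crossing flow'' was supplied to $\out_{e'}$ by $A_i$ rather than by $v$, and one unit leaving $\inn_e$ was absorbed by $A_i$ rather than delivered to $u$.

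The paper's proof repairs exactly this defect: after reading off the tentative $F$, it adds, for each $V_i$ whose deficient and excess vertices $u\neq v$ differ, a simple directed path from $u$ to $v$ that stays inside $V_i$. This is the only place in the proof where the hypothesis that each $V_i$ induces a strongly connected subgraph is used, and it is the genuine source of the ``$+1$'' in the degree bound $|\delta^+_F(v)|\le\lceil x^*(\delta^+(v))\rceil+1$. In your version you instead bake the ``$+1$'' into the circulation capacity, but that does not help: even with the extra unit of capacity, the projected multiset remains imbalanced, so the $3$-light accounting $w(\tilde G)=\sum_{v}f(v)\,d^+_F(v)$ is applied to a set that is not Eulerian and hence not a valid output for Local-Connectivity ATSP. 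To repair your write-up, keep the circulation capacity at $\lceil x^*(\delta^+(v))\rceil$ (which the fractional lift still respects, since $y'(\delta^+(v))\le x^*(\delta^+(v))$), and add the balancing path inside each $V_i$; the path raises each vertex's out-degree by at most one and restores the Eulerian property, after which your final inequality chain goes through.
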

%We first describe the algorithm and then prove that it returns a
%$3$-approximate solution. 
\begin{proof}
Let $G=(V,E,w)$ and $V_1 \cup V_2 \cup \ldots \cup
V_k$ be an instance of Local-Connectivity ATSP where $w: E \rightarrow \R^+
$ is a node-weighted metric defined by $f: V\rightarrow \R^+$. 
%
%Furthermore, let $f: V \rightarrow \R_{\geq 0}$
%be the function on the vertices such that $w(u,v) = f(u)$ for every edge $(u,v)
%\in E$.  wadaw d es what 
%
Let also $x^*$ be an optimal solution to $\LP(G)$. We prove the theorem by giving a polynomial time algorithm that finds a Eulerian multisubset $F$ of $E$ satisfying 
\begin{align}
  \label{eq:cycprop}
|\delta^+_{F}(V_i)|  \geq 1  \mbox{ for } i =1, \ldots, k \quad \mbox{and} \quad
|\delta^+_{F}(v)| & \leq \lceil x^*(\delta^+(v)) \rceil + 1 \mbox{ for } v\in V.
\end{align}
To see that this is sufficient, note that the Eulerian set $F$ forms a solution to the
Local-Connectivity ATSP instance because  $|\delta^+_{F}(V_i)|  \geq 1  \mbox{
for } i =1, \ldots, k$;  and it is  $3$-light since, for each $\tilde G
= (\tilde V, \tilde E) \in \con(F)$, we have (using that it is a node-weighted metric)
\begin{align*}
  \frac{w(\tilde G)}{\lb(\tilde G)}  
  = \frac{\sum_{v\in \tilde V}|\delta^+_{\tilde E}(v)| f(v)}{\sum_{v\in \tilde V} x^*(\delta^+(v)) f(v)} 
  \leq \frac{\sum_{v\in \tilde V} (\lceil x^*(\delta^+(v)) \rceil + 1) f(v)}{\sum_{v\in \tilde V} x^*(\delta^+(v)) f(v)}\leq 3. 
\end{align*}
The last inequality follows from $x^*(\delta^+(v)) \geq 1$ and
therefore $\lceil x^*(\delta^+(v)) \rceil + 1 \leq 3 x^*(\delta^+(v))$.

We proceed by describing a polynomial time algorithm for finding a Eulerian
set $F$ satisfying~\eqref{eq:cycprop}. We shall do so by finding a circulation
in an auxiliary graph $G'$ obtained from $G$ as follows (see also Figure~\ref{fig:Gaux}):

\begin{enumerate}
  \item Replace each edge $e=(u,v)$ in $G$ by adding vertices $\out_e, \inn_e$ and edges $(u, \out_e), (\out_e, \inn_e), (\inn_e, v)$; 
  \item For each partition $V_i$, $i =1, \ldots, k$, add an auxiliary vertex $A_i$ and edges $(A_i, \out_e)$ for every $e\in \delta^+(V_i)$ and $(\inn_e, A_i)$ for every $e\in \delta^-(V_i)$.
\end{enumerate}
Recall that a circulation in $G'$ is a  vector $y$ with a
nonnegative value for each edge satisfying flow conservation: $y(\delta^+(v)) =
y(\delta^-(v))$ for every vertex $v$.  The following claim follows from the
construction of $G'$ together with basic properties of flows
and circulations.
\begin{claim}
  We can in polynomial time find an integral circulation $y$ in $G'$ satisfying:
  $$
  y(\delta^+(A_i)) = 1 \mbox{ for } i=1,\ldots, k \quad \mbox{and} \quad y(\delta^+(v)) \leq \lceil x^*(\delta^+(v) \rceil \mbox{ for } v\in V.
  $$
\end{claim}
\begin{proof}
  We use the optimal solution $x^*$ to $\LP(G)$ to define a fractional circulation $y'$ in 
  $G'$ that satisfies the above degree bounds. As the vertex-degree bounds are integral, it
  follows  from basic facts about flows that we can in polynomial time find an
  integral circulation $y$ satisfying the same bounds (see e.g.
  Chapter~11~in~\cite{Schrijver03}). Circulation $y'$ is defined as follows:
  \begin{enumerate}
	\item for each edge $e=(u,v)$ in $G$ with $u,v\in V_i$: 
	  	\begin{align*}
		  y'_{(u,\out_e)} = y'_{(\out_e, \inn_e)} = y'_{(\inn_e, v)} = {x^*_{(u,v)}}\left(1- \frac{1}{x^*(\delta^+(V_i))}\right)\,.
     	\end{align*}
	\item for each edge $e=(u,v)$ in $G$ with $u\in V_i, v\in V_j$ where $i\neq j$:
	  \begin{align*}
		y'_{(\out_e,\inn_e)}  &= x^*_{(u,v)}\,,  \\
			y'_{(A_i, \out_e)}& = \frac{x^*_{(u,v)}}{ x^*(\delta^+(V_i))}\,, \qquad y'_{(u,\out_e)} = x^*_{(u,v)} \left(1- \frac{1}{x^*(\delta^+(V_i))}\right)\,, \\	
			y'_{(\inn_e,A_j)}& = \frac{x^*_{(u,v)}}{ x^*(\delta^+(V_j))}\,, \qquad y'_{(\inn_e, v)} = x^*_{(u,v)} \left(1- \frac{1}{x^*(\delta^+(V_j))}\right)\,. \\
	  \end{align*}
  \end{enumerate}
  Basically, $y'$ is defined so that a fraction $1/x^*(\delta^+(V_i))$ of the
  flow crossing the cut $(V_i, V \setminus V_i)$ goes through $A_i$. As
  $x^*(\delta^+(V_i))\geq 1$ we have that $y'$ is nonnegative. It is also
  immediate from the definition of $y'$ that it satisfies flow conservation and
  the degree bounds of the claim: the in- and out-flow of a vertex $v\in V_i$
  is $\left( 1- \frac{1}{x^*(\delta^+(V_i))} \right) x^*(\delta^+(v))$; the in-
  and out-flow of an auxiliary vertex $A_i$ is $1$ by design; and the in- and
  out-flow of $O_e$ and $I_e$ for $e=(u,v)$ is  $(1- 1/x^*(\delta^+(V_i)))x^*_e$ if $u,v \in V_i$ for some $i=1,\ldots, k$ and $x^*_e$ otherwise. As mentioned above, the existence of
  fractional circulation $y'$ implies that we can  also find, in polynomial
  time, an integral circulation $y$ with the required properties. 
\end{proof}

Having found an integral circulation $y$ as in the above claim, we now obtain
the Eulerian subset $F$ of edges. Initially, the set $F$ contains $y_{(\out_e,
\inn_e)}$ multiplicities of each edge $e$ in $G$. Note that with respect to
this edge set, in each partition $V_i$, either all vertices in $V_i$ are
 balanced (each vertex's in-degree equals its out-degree) or there exist exactly one
vertex $u$ so that $|\delta^+_{F}(u)| - |\delta^-_{F}(u)| = -1$ and one vertex $v$
so that $|\delta^+_{F}(v)| - |\delta^-_{F}(v)| = 1$. Specifically, let $u$ be the
head of the unique edge $e$ such that $y_{(\inn_e, A_i)}=1$ and let $v$ be the
tail of the unique edge $e'$ so that $y_{(A_i, \out_{e'})} = 1$. If $u=v$ then
all vertices in $V_i$ are balanced. Otherwise $u$ is so that $|\delta^+_{F}(u)|
- |\delta^-_{F}(u)| = -1$ and $v$ is so that $|\delta^+_{F}(v)| - |\delta^-_{F}(v)|
= 1$. In that case, we add a simple path from $u$ to $v$ to make the in-degrees
and out-degrees of these vertices balanced. As the graph induced by $V_i$ is
strongly connected, we can select the path so that it only visits vertices in $V_i$.
Therefore, we only increase the degree  of vertices in $V_i$ by at most $1$.
Hence, after repeating this operation for each partition $V_i$, we have that
$F$ is a Eulerian subset of edges and $|\delta_{F}(\delta^+(v))| \leq
y(\delta^+(v))+1 \leq \lceil x^*(\delta^+(v)) \rceil + 1$ for all $v\in V$.
Finally, we have $|\delta^+_{F}(V_i)| \geq 1$ for each $i=1,\ldots, k$ because
$y_{A_i, \out_e} = 1$ (and therefore $y_{\out_e, \inn_e}\geq 1$) for one edge
$e\in \delta^+(V_i)$. We have thus given a polynomial time algorithm that finds
a Eulerian subset $F$ satisfying the properties of~\eqref{eq:cycprop}, which,
as discussed above, implies that it is a  $3$-light algorithm for Node-Weighted
Local-Connectivity ATSP.
\end{proof}

\section{From Local to Global Connectivity}
\label{sec:LocalToGlobal}
In this section, we prove that if there is an $\alpha$-light algorithm for
Local-Connectivity ATSP, then there
exists an algorithm for ATSP with an approximation guarantee of $O(\alpha)$. 
The main theorem can be stated as follows.

\begin{theorem}
  Let $\cA$ be an algorithm for Local-Connectivity ATSP and  consider an ATSP
  instance $G=(V,E, w)$.  If $\cA$ is $\alpha$-light on $G$,  there exists a tour of $G$ with value at
  most $5\alpha\lb(V)$. %, if $\cA$ is $\alpha$-light on the given instance. 
  Moreover, for any $\varepsilon >0$,  a tour of value at most
  $(9+\varepsilon)\alpha\lb(V)$ can be found in time polynomial in the number
  $n=|V|$ of vertices, in $1/\varepsilon$,  and in the running time of $\cA$.
  \label{thm:LoctoGlo}
\end{theorem}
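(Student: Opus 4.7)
The plan is to design an iterative algorithm that maintains a Eulerian multisubset $F \subseteq E$, invokes $\cA$ on the partition of $V$ induced by the current connected components of $(V,F)$, and augments $F$ with a carefully chosen portion of $\cA$'s output together with cheap ``light cycles.'' The algorithm terminates once $F$ is connected, at which point shortcutting (using the triangle inequality) yields a tour of $G$ of the desired weight.

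I would initialize $F_0$ to be a Eulerian edge multiset of weight at most $\lb(V)$; a natural candidate is obtained by interpreting the optimal LP solution $x^*$ as a fractional Eulerian covering and integrally rounding it, which in the polynomial-time setting incurs at most a $(1+\varepsilon)$ multiplicative loss. The connected components of $F_0$ yield a partition $\{V_1,\dots,V_k\}$ of $V$ that plays the role of the initial cycle cover $C$ in the naive algorithm from the introduction and constitutes the ``carefully chosen Eulerian partition'' alluded to in the overview.

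The main loop then repeats the following until $F$ is connected: (i) invoke $\cA$ on the current partition $\{V_1,\dots,V_k\}$ to obtain a Eulerian multiset $F'$ whose every component $\tilde G$ satisfies $w(\tilde G) \le \alpha \, \lb(\tilde G)$; (ii) add to $F$ a suitably chosen sub-collection of the components of $F'$ (for example, those crossing the most parts $V_i$), strictly reducing the number of components of $F$; and (iii) opportunistically insert any ``light cycle'' $C$ with $w(C) \le O(\alpha)\,\lb(V(C))$ that merges two components of $F$. The defining feature of this design is that every augmentation to $F$ fuses previously disconnected parts and its weight is at most $\alpha$ times the lb of the vertices it spans, so its cost can be charged to the lb of the newly-merged super-component.

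The heart of the proof is the charging argument bounding $w(F)$ by $5\alpha\,\lb(V)$. Concretely, I would define a potential function that distributes the cost of each augmentation across the lb of the affected vertices, arguing that each $\lb(v)$ is ``consumed'' only a bounded number of times over the entire execution, regardless of how many iterations occur. For this to give a constant (rather than $\Theta(\log n)$) blow-up, the augmentation rule in step (ii) must be chosen dependently on the current Eulerian partition, favoring components of $F'$ that merge many parts simultaneously, and step (iii) must exhaust all available cheap merges between calls to $\cA$. The main obstacle will be precisely this accounting: preventing the logarithmic blow-up intrinsic to iterated cycle-cover methods while respecting only the \emph{local} connectivity guarantee of $\cA$. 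The gap between $5\alpha$ and $(9+\varepsilon)\alpha$ in the polynomial-time version will come from the $(1+\varepsilon)$ loss in rounding the initial $F_0$ and from approximating the cheapest light cycle in step (iii) within the same factor.
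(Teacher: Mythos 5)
Your proposal captures the broad shape of the paper's algorithm—repeatedly invoke $\cA$ on the current component partition, add a chosen portion of its output together with cheap connecting cycles, and charge the cost against $\lb$—but it leaves the central difficulty unresolved, and you say so yourself: ``The main obstacle will be precisely this accounting: preventing the logarithmic blow-up.'' The paper's entire contribution to this theorem is the specific mechanism that prevents that blow-up, and that mechanism is absent from your sketch.

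Concretely, two ideas are missing. First, the initialization is not a rounded LP solution of weight roughly $\lb(V)$: the paper starts from a $2\alpha$-light Eulerian partition $H^*_1,\dots,H^*_k$ chosen to \emph{maximize the lexicographic order} of $\langle \lb(H^*_1),\dots,\lb(H^*_k)\rangle$. This extremal choice is exactly what powers the charging argument: it yields (Claims~\ref{claim:boundlex1} and~\ref{claim:boundlex2}) that any component $H$ of $\cA$'s output with $\low(H)=H^*_i$ has $\lb(H)\le\lb(H^*_i)$, and that $\lb(\cF^t_i)\le 2\lb(H^*_i)$. Without some such structural invariant on the initial partition, there is no reason each $\lb(v)$ is consumed only $O(1)$ times. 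Second, the augmentation rule is not ``favor components of $F'$ that merge many parts'': in each update phase the paper selects the component $\tilde G$ of $\con(E^*\cup F\cup X)$ that \emph{minimizes} $\lb(\low(\tilde G))$, adds a cheap cycle only if one of weight at most $\alpha\lb(\low(\tilde G))$ connects $\tilde G$ outward, and commits to $E^*$ only the edges inside the final such $\tilde G$. This ordering by $\low$ is what lets each cycle ``mark'' a distinct $H^*_i$ (Lemma~\ref{lem:bound1}) and what forces each $\cF^t_i$ to be nonempty in at most one repetition $t$ (end of Lemma~\ref{lem:bound2}). Your proposal has no analogue of either.

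Finally, your explanation of the gap between $5\alpha$ and $(9+\varepsilon)\alpha$ is incorrect. The loss does not come from rounding an initial $F_0$ or from approximately finding cheapest cycles (finding the cheapest cycle through a given cut is exact and easy). It comes from the fact that the lexicographically maximal Eulerian partition cannot be computed in polynomial time; the paper replaces it by starting from the trivial singleton partition, running a relaxed merge procedure with the if-condition loosened to $w(C)\le\alpha(3\lb(\low(\tilde G))+\varepsilon\lb(V)/n)$, and \emph{restarting} whenever the relaxed analogue of Claim~\ref{claim:boundlex2} fails, using Lemma~\ref{lem:polytime} to build a better $3\alpha$-light partition that strictly increases the potential $\sum_i\lb(H^*_i)^2$. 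The constants $3$ and $\varepsilon$ in the relaxed conditions, together with the switch from a $2\alpha$-light to a $3\alpha$-light starting partition, are what turn $5\alpha$ into $(9+\varepsilon)\alpha$.
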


Throughout this section, we let $G=(V,E,w)$ and $\cA$ be fixed as in the
statement of the theorem. The proof of the theorem is by giving an algorithm
that uses $\cA$ as a subroutine. We first give the non-polynomial algorithm in
Section~\ref{sec:ExistDescAlg} (with the better guarantee) followed by Section~\ref{sec:polyalg} where we
modify the arguments so that we also  efficiently find a tour (with slightly worse guarantee).

\subsection{Existence of a Good Tour}
\label{sec:ExistDescAlg}
Before describing the (non-polynomial) algorithm, we need to introduce the concept of Eulerian
partition. 
We say that graphs $H_1 = (V_1, E_1), H_2=(V_2,E_2), \ldots, H_k = (V_k, E_k)$
form a \emph{Eulerian partition} of $G$ if the vertex sets $V_1, \ldots, V_k$
form a partition of $V$ and each $H_i$ is a connected Eulerian graph where
$E_i$ is a multisubset of $E$. It is an \emph{$\beta$-light Eulerian partition}
if in addition
\begin{align*}
w(H_i) & \leq \beta\cdot \lb(H_i) \qquad \mbox{for } i =1, \ldots, k.
\end{align*}
Our goal is to find a $5\alpha$-light Eulerian partition that only
consists of a single component, i.e., a $5\alpha$-approximate solution to the
ATSP instance $G$ with respect to the Held-Karp relaxation.

The idea of the algorithm is to start with a Eulerian partition and then
iteratively merge/connect these connected components into a single connected
component by adding  (cheap) Eulerian subsets of edges. Note that, since we
will only add Eulerian subsets, the algorithm always maintains that the
connected components are Eulerian. 

The \emph{state of the algorithm} is described by a Eulerian multiset $E^*$
that contains the multiplicities of the edges that the algorithm has picked.

\paragraph{Initialization}
The algorithm starts with a $2\alpha$-light Eulerian partition $H^*_1= (V^*_1, E^*_1), \ldots, H^*_k = (V^*_k, E^*_k)$ that maximizes the lexicographic order of
\begin{align}
  \label{eq:lexord}
\langle \lb(H^*_1), \lb(H^*_2), \ldots, \lb(H^*_k) \rangle.
\end{align}
As the lexicographic order is maximized, the Eulerian partitions are ordered so
that $\lb(H^*_1) \geq \lb(H^*_2) \geq \cdots \geq \lb(H^*_k)$. For simplicity,
we assume that these inequalities are strict (which is w.l.o.g. by 
breaking ties arbitrarily but consistently).
The set $E^*$ is initialized so that it contains the edges of the Eulerian
partitions, i.e., $E^* = E^*_1 \cup E^*_2 \cup \cdots \cup E^*_k$. 

During the
execution of the algorithm we will also use the following concept. For a
connected subgraph $\tilde G = (\tilde V, \tilde E)$ of $G$,  let
$\low(\tilde G)$ denote the Eulerian partition $H^*_i$ of lowest index
$i$ that intersects $\tilde G$\footnote{Equivalently, it is the set $
H^*_i$ maximizing $\lb(H^*_i)$ over all sets in the Eulerian partition that intersect
$\tilde G$.}. That is, 
\begin{align*}
  \low(\tilde G) =  H^*_{\min\{i: V^*_i \cap \tilde V \neq \emptyset\}}. 
\end{align*}
Note that after initialization, the connected components in $\con(E^*)$ are exactly the
subgraphs $H^*_1, \ldots, H^*_k$. This means that $H^*_i = \low(\tilde G)$
for exactly one component $\tilde G \in \con(E^*)$.  
Moreover, as the algorithm will only add edges, each $H^*_i$ will be in at
most one component throughout the execution. 

\begin{remark}
  \label{rem:init}
  The main difference in the polynomial time algorithm is the initialization
  since we do not know how to find a $2\alpha$-light Eulerian partition
  that maximizes the lexicographic order in polynomial time. 
  Indeed, it is consistent with our knowledge that $2\alpha$ (even $2$) is an upper bound on the
  integrality gap and, in that case,  such an algorithm would  always find a
  tour.
\end{remark}
\begin{remark}
  For intuition, let us mention that the reason for starting with a Eulerian partition that maximizes the lexicographic order is that we will use the following properties to bound the weight of the total tour:
  \begin{enumerate}
    \item A connected Eulerian subgraph $H$ of $G$ with $w(H)\leq 2\alpha \lb(H)$ has $\lb(H) \leq \lb(\low(H))$.
    \item For any disjoint connected Eulerian subgraphs $H_1, H_2, \ldots, H_\ell$ of $G$ with $\low(H_j) = H^*_i$ and $w(H_j) \leq \alpha \lb(H_j)$ for $j=1, \ldots, \ell$, we have
  \begin{align*}
	\sum_{j=1}^\ell \lb(H_j) \le 2 \lb(H^*_i).
  \end{align*}
  \end{enumerate}
  These bounds will be used to bound the weight of the edges added in the merge procedure. Their proofs are easy and can be found in the analysis (see the proofs of Claim~\ref{claim:boundlex1} and Claim~\ref{claim:boundlex2}).
\end{remark}

\paragraph{Merge procedure}

The algorithm repeats the following ``merge procedure'' until $\con(E^*)$ contains a single connected component.
The components in $\con(E^*)$ partition the vertex set and each component is strongly connected as it is Eulerian (since $E^*$ is a Eulerian subset of edges).
 The algorithm can therefore use $\cA$ to find a Eulerian multisubset $F$ of  $E$ such that   
\begin{itemize}
  \item[(i)] $|\delta^+_{F}(\tilde V)|  \geq 1 \qquad \mbox{ for all } (\tilde V, \tilde E) \in \con(E^*)$; and
  \item[(ii)]  for each $\tilde G \in \con(F)$ we have $w(\tilde G) \leq \alpha \lb(\tilde G)$.
\end{itemize}
Note that $\cA$ is guaranteed to find such a set since it is assumed to be an $\alpha$-light algorithm for Local-Connectivity ATSP on $G$. 
Furthermore, we
may assume that no connected component in $\con(F)$ is completely contained in
a connected component in $\con(E^*)$ (except for the trivial components formed by
singletons). Indeed, the edges of such a component can
safely be removed from $F$ and we have a new (smaller) multiset that satisfies the
above conditions.
Having selected $F$, we now proceed to explain the ``update phase'':

\begin{enumerate}[label=U\arabic*:, ref=U\arabic* ]\itemsep0mm
	\item \label{en:u1}  Let $X = \emptyset$.  
	\item \label{en:u2}Select the component $\tilde G = (\tilde V, \tilde E) \in \con(E^* \cup F \cup X)$ that \emph{minimizes}	$\lb(\low(\tilde G))$.
  \item \label{en:u3}If there exists a cycle $C = (V_C, E_C)$ in $G$  of weight $w(C) \leq \alpha \lb(\low(\tilde G))$
	that connects $\tilde G$ to another component in $\con(E^* \cup F \cup X)$, then add $E_C$ to $X$ and
	repeat from Step~\ref{en:u2}.
  \item \label{en:u4}Otherwise, update $E^*$ by adding the ``new'' edges in $\tilde E$, i.e.,  $E^* \leftarrow E^* \cup (\tilde E \cap F) \cup (\tilde E \cap X)$. 
	
\end{enumerate}

Some comments about the update of $E^*$ are in order. We emphasize that we do \emph{not}
add all edges of $F \cup X$ to $E^*$. Instead, we only add those new edges that belong to  
 the component $\tilde G$ selected in the final iteration of the update phase.
As $\tilde G$ is a connected component in $\con(E^* \cup F
\cup X)$, $F$ and $X$ are Eulerian subsets of edges, we have that $E^*$
remains Eulerian after the update. This finishes the description of the merging procedure and the algorithm (see also the example below). 

\begin{figure}[h]
\centering
\newcommand{\base}{
\draw[fill=gray!20] plot [smooth cycle,tension=0.7] coordinates { (-1,-1) 
  (1,-1) (1.4,1) (-1,1)};
  \draw[gray] (-1,-1) edge[-, bend right = 10 ] (0,0);
  \draw[gray] (1,-1) edge[-, bend left = 20] (0,0);
  \draw[gray] (1.4,1) edge[-, bend right = 15 ] (0,0);
  \draw[gray] (-1,1) edge[-, bend left = 10] (0,0);
%  \node at (0,1.8) {$C_1$};
  \node at (0.1, 0.8) {$H^*_{10}$};
  \node at (-0.8, 0.0) {$H^*_9$};
  \node at (0.1, -0.8) {$H^*_7$};
  \node at (0.9, 0.0) {$H^*_6$};

  \begin{scope}[xshift=4.5cm]
	\draw[fill=gray!20] plot [smooth cycle,tension=0.7] coordinates { (-0.7,-0.7) 
  		(0.5,-0.7) (1, 0) (0.8,0.7) (-0.8,0.8) (-1.2, 0)};
	\node at (0.0, 0.0) {$H^*_3$};
  \end{scope}

  \begin{scope}[xshift=8cm]
	\draw[fill=gray!20] plot [smooth cycle,tension=0.7] coordinates { (-0.7,-0.7) 
  		(0.5,-0.7) (1, 0) (0.8,0.7) (-0.8,0.8) (-0.6, 0)};
  	\draw[gray] (-0.6,0) edge[-, bend left =20] (1,0);
	\node at (0.2, -0.35) {$H^*_8$};
	\node at (0.2, 0.5) {$H^*_5$};
  \end{scope}

  \begin{scope}[yshift = -3.5cm]
  \begin{scope}[xshift=0cm]
	\draw[fill=gray!20] plot [smooth cycle,tension=0.7] coordinates { (-0.7,-0.7) 
  		(0.5,-0.7) (1, 0) (0.8,0.7) (0, 1) (-0.8,0.8) (-1.2, 0)};
	\node at (0.0, 0.1) {$H^*_4$};
  \end{scope}
  \begin{scope}[xshift=4.5cm]
	\draw[fill=gray!20] plot [smooth cycle,tension=0.7] coordinates { (-0.7,-0.7) 
  		(0.5,-0.7)  (0.8,0.7) (0, 1.2) (-0.8,0.8) (-1.2, 0)};
	\node at (-0.1, 0.1) {$H^*_2$};
  \end{scope}
  \begin{scope}[xshift=8cm, yshift=0.3cm]
	\draw[fill=gray!20] plot [smooth cycle,tension=0.7] coordinates { (-1,-1) 
  		(0.5,-1)  (0.8,0.7) (0, 1.2) (-1,1) (-1.4, 0) };
	\node at (-0.2, -0.0) {$H^*_1$};
  \end{scope}
  \end{scope}
}
\begin{tikzpicture}
%\draw[fill=gray!5, xshift=-0.3cm, yshift=0.5cm] plot [smooth cycle,tension=0.7] coordinates { (-0.5,-0.5) (0, -0.6)
 % (0.5,-0.7) (0.3,0.3) (-0.3, 0.3) };
\base
\draw[thick, blue] plot [smooth cycle, tension = 1] coordinates {(-0.4,-0.5) (-0.4, -2.5) (0.7, -2.6) (1, -0.5) };
\draw[ultra thick, blue] plot [smooth cycle, tension = 1] coordinates {(5.6,0.0) (5.6, 0.7) (7.5, 0.7) (7.5, -0.2) };
\draw[thick, blue] plot [smooth cycle, tension = 1] coordinates {(5.2, -3.7) (5.2,-3) (7.1, -3.7) (7.1, -2.5) };
\draw[thick, red, dashed] plot [smooth cycle, tension = 1] coordinates {(4.2, -2.5) (3.5,-2.5)  (0.2, -0.2) (1.5, 0.1) };
\end{tikzpicture}
\caption{An illustration of the merge procedure. Blue  (solid) cycles depict $F$
and the red (dashed) cycle depicts $X$ after one iteration of the update
phase. The thick cycle represents the edges that this merge procedure would add to
$E^*$.}
\label{fig:algo}
\end{figure}
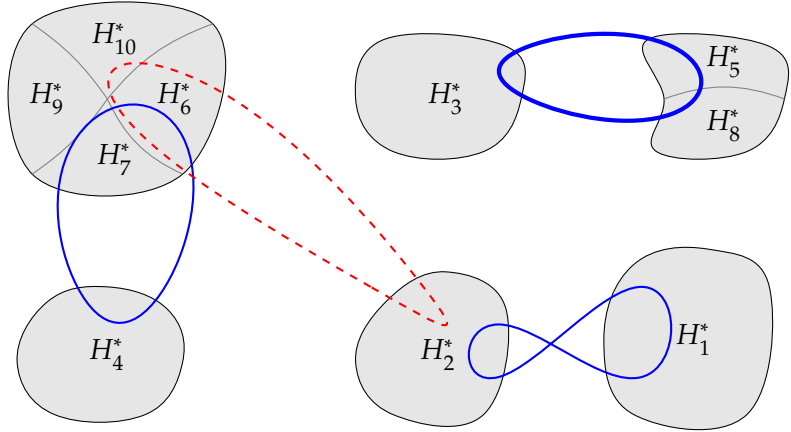

\begin{example}
  In Figure~\ref{fig:algo}, we have that, at the start of a merging step, $\con(E^*)$ consists of $6$ components containing
$\{H^*_6, H^*_7, H^*_9, H^*_{10}\}, \{H^*_3\}, \{H^*_5, H^*_8\}, \{H^*_4\}, \{H^*_2\},$ and  $\{H^*_1\}$.
The blue (solid) cycles depict the connected Eulerian components of the edge
set $F$. First, we set $X =\emptyset$ and the algorithm selects the
component $\tilde G$ in $\con(E^* \cup F \cup X)$ that minimizes $\lb(\low(\tilde G))$ or, equivalently, that maximizes $\min \{i: H^*_i \mbox{ intersects } \tilde G\}$. In
this example, it would be the left most of the three components in $\con(E^* \cup
F)$ with $\low(\tilde G) = H^*_4$. The algorithm now tries to connect this component
to another component by adding a cycle with weight at most
$\alpha \lb(H^*_4)$. The red (dashed) cycle corresponds to such a cycle and its edge set  is
added to $X$. In the next iteration, the algorithm considers the two
components in $\con(E^* \cup F \cup X)$.  The smallest component (with
respect to $\lb(\low(\tilde G))$)  is the one that contains $H^*_3,
H^*_5$, and $H^*_8$. Now suppose that there is no cycle of weight at
most $\alpha \lb(H^*_3)$ that connects this component to another component.
Then the set $E^*$ is updated by adding those edges of $F\cup X$ that
belong to this component (depicted by the thick cycle).
\end{example}

\subsubsection{Analysis}

\paragraph{Termination} We show that the algorithm terminates by arguing that
the update phase terminates with fewer connected components and the merge procedure is therefore 
 repeated at most $k\leq n$ times. 

\begin{lemma}
  \label{lem:termination}
The update phase terminates in polynomial time and decreases the number of connected components in $\con(E^*)$. 
 \end{lemma}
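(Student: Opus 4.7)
The plan is to establish termination via three simple claims: (a) each iteration of the inner loop U2--U3 runs in polynomial time, (b) the inner loop halts within $n$ iterations, and (c) one execution of step U4 strictly decreases $|\con(E^*)|$. Combined, these give polynomial-time termination plus the desired monotonic progress.

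For (b), each pass through U3 adds to $X$ the edges of a cycle that, by construction, connects the currently chosen $\tilde G$ to a different component of $\con(E^* \cup F \cup X)$. Hence $|\con(E^* \cup F \cup X)|$ drops by at least one per inner iteration, and since this quantity is bounded by $n$, the inner loop can run at most $n$ times before falling through to U4. For (a), computing connected components and the map $\low(\tilde G)$ is immediate. The only nontrivial substep is finding a cheapest cycle in U3: because $w$ satisfies the triangle inequality, after shortcutting the cheapest cycle connecting $\tilde G$ to another component is a $2$-cycle $u \to v \to u$ with $u \in \tilde V$ and $v$ outside $\tilde V$ in a different component, so scanning the $O(n^2)$ such pairs suffices.

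The crux --- and the one place that takes a moment of thought --- is (c): I must show that the $\tilde G$ selected in the final iteration of the update phase necessarily spans at least two components of $\con(E^*)$, so that adding its edges merges them. This is where the local-connectivity property of $F$ enters. Applying the defining condition $|\delta^+_F(\tilde V)| \geq 1$ to every $(\tilde V, \tilde E) \in \con(E^*)$ (and using the standing assumption that no component of $F$ sits inside a single component of $\con(E^*)$ except trivially), every component of $\con(E^*)$ has at least one outgoing $F$-edge leading to a distinct component of $\con(E^*)$. Therefore every component of $\con(E^* \cup F)$, and hence of $\con(E^* \cup F \cup X)$, is a union of at least two components of $\con(E^*)$. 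In particular, this holds for the chosen $\tilde G = (\tilde V, \tilde E)$.

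To conclude (c), observe that step U4 adds to $E^*$ all edges of $(\tilde E \cap F) \cup (\tilde E \cap X)$, so after the update the full edge set $\tilde E$ lies in the new $E^*$. Since $(\tilde V, \tilde E)$ is connected, $\tilde V$ collapses to a single component in the new $\con(E^*)$, thereby merging the $\geq 2$ former components of $\con(E^*)$ into one. This strictly decreases $|\con(E^*)|$, completing the proof.
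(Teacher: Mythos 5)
Your three-part decomposition matches the paper's proof almost exactly: part (b), bounding the inner loop by noting that each U3 pass merges two components of $\con(E^* \cup F \cup X)$, and part (c), that the $\tilde G$ chosen at U4 spans at least two components of $\con(E^*)$ because $|\delta^+_F(V_i)|\ge 1$ for each such component, are precisely the paper's arguments and are correct.

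The flaw is in (a), your implementation of Step~\ref{en:u3}. You claim that, by the triangle inequality and shortcutting, the cheapest connecting cycle is a $2$-cycle $u\to v\to u$, so scanning $O(n^2)$ ordered pairs suffices. But the ATSP formulation used throughout Section~\ref{sec:LocalToGlobal} is the Eulerian one on an arbitrary strongly connected digraph $G=(V,E,w)$: the two edges $(u,v)$ and $(v,u)$ need not both lie in $E$, and $w$ is just a nonnegative edge-weight function that does \emph{not} satisfy a triangle inequality on $E$ --- that inequality lives only in the implicit metric completion. Since the edges of whatever cycle is chosen in U3 are added to $X$ and eventually to $E^*$, which by definition must remain a multisubset of $E$, you are not free to introduce shortcut edges that do not exist in $G$. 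The paper's implementation avoids this: for each $(u,v)\in\delta^+(\tilde V)$, form the cycle consisting of $(u,v)$ together with a shortest $v$-to-$u$ path in $G$, and take the minimum over these $O(|E|)$ candidates. This is polynomial, uses only edges of $E$, and is exact since any cycle leaving $\tilde V$ contains some $(u,v)\in\delta^+(\tilde V)$ together with a $v$-to-$u$ path. Substitute this for your $2$-cycle scan and the rest of your argument goes through unchanged.
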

 \begin{proof}
   First, observe that each single step of the update phase can be implemented
   in polynomial time. The only nontrivial part is Step~\ref{en:u3} which can
   be implemented as follows: for each edge $(u,v)\in \delta^+(\tilde V)$ consider the cycle consisting of $(u,v)$ and 
   a shortest path from $v$ to $u$. Moreover, the whole update phase terminates
   in polynomial time because  each time the if-condition of Step~\ref{en:u3}
   is satisfied, we add a cycle to $X$ that decreases the number of connected
   components in $\con(E^* \cup F \cup X)$.  The if-condition of
   Step~\ref{en:u3} can therefore be satisfied at most $ k \leq n$ times.

   We proceed by proving that at termination the update phase decreases the
   number of connected components in $\con(E^*)$. Consider when the algorithm
   reaches Step~\ref{en:u4}. In that case it has selected a component $\tilde
   G=(\tilde V, \tilde E) \in \con(E^* \cup F \cup X)$. Note that $\tilde G
   \not \in \con(E^*)$ because the edge set $F$ crosses each cut defined by
   the vertex sets of the connected components in $\con(E^*)$. Therefore when
   the algorithm updates $E^*$ by adding all  the edges $(F \cup X) \cap
   \tilde E$ it decreases the number of components in $\con(E^*)$ by at least
   one. 
 \end{proof}

 \paragraph{Performance Guarantee}

 To analyze the performance guarantee we shall split our analysis into two
 parts. Note that when one execution of the merge procedure terminates (Step~\ref{en:u4}) we add edge set
 $(F\cap \tilde E) \cup (X \cap \tilde E)$ to our solution. We shall analyze the
 contribution of these two sets $F \cap \tilde E$ and $X \cap \tilde E$ separately.
 More formally, suppose that the algorithm does $T$ repetitions of
 the merge procedure. Let $\tilde G_1= (\tilde V_1, \tilde E_1), \tilde G_2 = (\tilde V_2, \tilde E_2), \ldots, \tilde G_T= (\tilde V_T, \tilde E_T)$, $F_1, F_2, \ldots, F_T$, and
 $X_1, X_2, \ldots, X_T$ denote the selected components, the edge set
 $F$, and the edge set $X$, respectively, at the end of each repetition.
  To simplify notation, we denote the edges added to $E^*$ in the $t$:th
 repetition by $\tilde F_{t} = F_t \cap \tilde E_t$ and $\tilde X_{t} =
 X_t \cap \tilde E_t$.
 
 With this notation, we proceed to bound the total weight of the solution by
 \begin{align*}
%   \label{eq:totalcost}
   \underbrace{w\left(\cup_{t=1}^T \tilde F_{t}\right)}_{\leq 2\alpha \lb(V)\mbox{\scriptsize\  by Lemma~\ref{lem:bound2}}} 
   + \underbrace{w\left(\cup_{t=1}^T \tilde X_{t}\right)}_{\leq \alpha \lb(V)\mbox{\scriptsize\ by Lemma~\ref{lem:bound1}}} 
   + {\sum_{i=1}^k w(H^*_i)}
   \leq 5\alpha \lb(V) \mbox{ as claimed in Theorem~\ref{thm:LoctoGlo}.}
 \end{align*}
 Here we used that $\sum_{i=1}^k w(H^*_i) \leq 2\alpha \lb(V)$ since $H^*_1, \ldots, H^*_k$ is a $2\alpha$-light Eulerian partition. It remains to prove Lemmas~\ref{lem:bound1} and~\ref{lem:bound2}.

 \begin{lemma} 
   \label{lem:bound1}We have 
   $w\left(\cup_{t=1}^T \tilde X_{t} \right)  \leq \alpha \lb(V).$ 
\end{lemma}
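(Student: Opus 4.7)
The plan is a global charging argument. Each cycle $C$ added in Step~\ref{en:u3} comes with a ``charge index'' $i(C)$: namely, if $\tilde G$ is the component selected by Step~\ref{en:u2} at the iteration when $C$ is added, then $i(C)$ is the index with $\low(\tilde G)=H^*_{i(C)}$, and by construction $w(C)\leq \alpha\,\lb(H^*_{i(C)})$. It therefore suffices to prove that each index $i\in\{1,\ldots,k\}$ is the charge index of at most one cycle that eventually contributes to some $\tilde X_t$, for then $w(\bigcup_t \tilde X_t) \leq \alpha\sum_{i=1}^k \lb(H^*_i) = \alpha\,\lb(V)$.

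I would establish this in two stages. The first stage concerns a single update phase and exploits the strict ordering $\lb(H^*_1)>\lb(H^*_2)>\cdots>\lb(H^*_k)$. When the min-$\lb(\low)$ component $\tilde G$ (with $\low=H^*_i$) is merged via a cycle with one or more other components (each of whose $\low$ has index strictly less than $i$, since $\tilde G$ is the strict minimizer and the $\lb$-values are distinct), the merged component inherits the smaller $\low$-index and thus a strictly larger $\lb$-value. Consequently the minimum of $\lb(\low)$ over components of $\con(E^* \cup F \cup X)$ strictly increases from iteration to iteration, so each $H^*_i$ can serve as a charge index at most once within a single phase. A short additional argument shows that $i_t$, the $\low$-index of the final selected component $\tilde G_t$, cannot serve as a charge index for any cycle in $\tilde X_t$: such a cycle would force some $H^*_{j'}$ with $j'<i_t$ into the merging component, and since the cycle ultimately lies in $\tilde G_t$, this $H^*_{j'}$ would end up intersecting $\tilde V_t$, contradicting $\low(\tilde G_t)=H^*_{i_t}$. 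Hence every charge index arising from a cycle in $\tilde X_t$ lies in $I_t:=\{i:V^*_i\cap \tilde V_t \ne \emptyset\}$ and is strictly greater than $i_t$.

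The hard part, and where the argument earns its keep, is ruling out reuse of an index across different merge procedures. The crucial observation is that once Step~\ref{en:u4} of merge procedure $t$ commits $\tilde G_t$'s edges to $E^*$, the whole of $\tilde V_t$ lies in a single $\con(E^*)$-component and remains so forever (since $E^*$ only grows). In particular, for every $j\in I_t$ with $j>i_t$, the vertex set of $H^*_j$ is permanently fused with that of $H^*_{i_t}$, which has strictly smaller index. In any later merge procedure $t'>t$, every component of $\con(E^* \cup F_{t'} \cup X_{t'})$ containing $H^*_j$ must therefore also contain $H^*_{i_t}$, so its $\low$-index is at most $i_t<j$, and $H^*_j$ can never again play the role of $\low$ for a selected component. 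Combining the two stages, each index is the charge index of at most one cycle contributing to $\bigcup_t \tilde X_t$, which yields the desired bound.
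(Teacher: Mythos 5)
Your proof is correct and rests on the same key observation as the paper's: the moment a cycle is charged to $H^*_i$ (i.e., $H^*_i$ is the $\low$ of the selected minimizer and that cycle is committed to $E^*$), the vertex set of $H^*_i$ is permanently fused with some $H^*_{j'}$ of strictly smaller index, so $H^*_i$ can never again be the $\low$ of any component, within the current phase or any later one. The paper packages this as a single statement — consider the first cycle that ``marks'' $H^*_i$; afterwards no component ever has $\low = H^*_i$ — whereas you split the same reasoning into a within-phase monotonicity claim and an across-phase retirement claim, but the content and the bound $w(\bigcup_t \tilde X_t) \leq \alpha \sum_i \lb(H^*_i) = \alpha\,\lb(V)$ are identical.
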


\begin{proof}
  Note that $\tilde X_t$ consists of a subset of the cycles added to $X_t$
  in Step~\ref{en:u3} of the update phase. Specifically, those cycles contained  in the
  connected component $\tilde G_t$ selected at Step~\ref{en:u2} in the last iteration
  of the update phase during the  $t$:th repetition of the merge procedure. We can therefore decompose
  $\cup_{t=1}^T \tilde X_{t}$ into cycles $C_1 = (V_1, E_1), C_2 = (V_2,
  E_2), \ldots, C_c = (V_c, E_c)$  indexed in the order they were added by the
  algorithm.  
  When $C_j$ was selected in Step~\ref{en:u3} of the update phase, it satisfied the following two properties:
  \begin{itemize}
	\item[(i)] it connected the component $\tilde G$  selected in Step~\ref{en:u2} with at least one other component $\tilde G'$ such that $\lb(\low(\tilde G')) > \lb(\low(\tilde G))$; and

	\item[(ii)] it had weight $w(C_j)$ at most $\alpha\lb(\low(\tilde G))$. 
  \end{itemize}
  In this case, we say that $C_j$ marks $\low(\tilde G)$.

  We claim that at most one cycle in $C_1, C_2, \ldots, C_c$  marks
  each $H^*_1, H^*_2, \ldots, H^*_k$. To see this, consider the
  first cycle $C_j$ that marks $H^*_i$ (if any). By (i) above,
  when $C_j$ was added, it connected two components $\tilde G$ and
  $\tilde G'$ such that $\lb(\low(\tilde G')) > \lb(\low(\tilde G))$ where $\low(\tilde G) = H^*_i$.   As the algorithm
  only adds edges, $\tilde G$ and $\tilde G'$ will remain connected throughout the
  execution of the algorithm. Therefore, by the definition of $\low$ and by
  the fact that $\lb(\low(\tilde G')) > \lb(\low(\tilde G))$, we have that a component $\tilde G''$ appearing later in the algorithm  always
  has $\low(\tilde G'') \neq H_i^*$. Hence, no other cycle marks $H_i^*$.

 The bound now follows from that  at most one  cycle
 marks each $H^*_i$ and such a cycle has weight at most $\alpha
 \lb(H^*_i)$.
\end{proof}
We complete the analysis of the performance guarantee with the following lemma.
\begin{lemma}
  \label{lem:bound2}
  We have
  $w\left(\cup_{t=1}^T \tilde F_{t} \right)  \leq  2 \alpha \lb(V)$. 
\end{lemma}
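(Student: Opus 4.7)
My plan is a charging argument in the same spirit as the proof of Lemma~\ref{lem:bound1}, but with the objects to be charged being the connected components of each $\tilde F_t$ rather than individual cycles. I will rely on the two structural facts hinted at in the remark following the initialization: (1) every connected Eulerian subgraph $H$ with $w(H) \leq 2\alpha\lb(H)$ satisfies $\lb(H) \leq \lb(\low(H))$, and (2) for any family of pairwise vertex-disjoint $\alpha$-light connected Eulerian subgraphs $H_1, \ldots, H_\ell$ all having $\low(H_j) = H^*_i$ one has $\sum_{j=1}^\ell \lb(H_j) \leq 2\lb(H^*_i)$. Both are exchange arguments exploiting the lexicographic maximality of the initial Eulerian partition.

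First I decompose each $\tilde F_t$ into its connected components $\tilde H_{t,1}, \ldots, \tilde H_{t,\ell_t}$. Because $\tilde G_t$ is a connected component of $\con(E^* \cup F_t \cup X_t)$, every connected component of $F_t$ lies either entirely inside $\tilde V_t$ or entirely outside, so $\tilde F_t$ is exactly the union of those components of $F_t$ whose vertex set sits inside $\tilde V_t$. Each $\tilde H_{t,j}$ is therefore a connected Eulerian subgraph and satisfies $w(\tilde H_{t,j}) \leq \alpha \lb(\tilde H_{t,j})$ by the $\alpha$-lightness of $\cA$; fact~(1) then yields $\lb(\tilde H_{t,j}) \leq \lb(\low(\tilde H_{t,j}))$. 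Let $i(t,j)$ denote the index of $\low(\tilde H_{t,j})$.

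The crux is to show that, for each fixed $i$, the family $\{\tilde H_{t,j} : i(t,j) = i\}$ is pairwise vertex-disjoint. Within a single iteration $t$ this is automatic, since its elements are distinct connected components of $F_t$. Across iterations I plan to exploit the fact that once the edges of $\tilde H_{t,j}$ are absorbed into $E^*$ at the end of iteration $t$, the vertices of $\tilde H_{t,j}$ and those of $H^*_i$ lie together in a single component of $\con(E^*)$; any later $\tilde H_{t',j'}$ with $\low = H^*_i$ must, by the assumption that no connected component of $F_{t'}$ lies inside a single $\con(E^*)$-component, span at least two such components, and if it shared a vertex with $\tilde H_{t,j}$ then $\tilde V_{t'}$ would necessarily contain all of $\tilde V_t$. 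Combining this with the selection rule for $\tilde G_{t'}$ (minimum $\lb(\low(\cdot))$) and the lexicographic maximality of the initial partition, I expect to derive a contradiction. Granting the disjointness, fact~(2) yields $\sum_{(t,j) : i(t,j) = i} \lb(\tilde H_{t,j}) \leq 2\lb(H^*_i)$, and the lemma then follows from
\begin{align*}
  w\Paren{\bigcup_{t=1}^T \tilde F_t} \;=\; \sum_{t,j} w(\tilde H_{t,j}) \;\leq\; \alpha \sum_i \sum_{(t,j) : i(t,j) = i} \lb(\tilde H_{t,j}) \;\leq\; 2\alpha \sum_{i=1}^k \lb(H^*_i) \;=\; 2\alpha\,\lb(V).
\end{align*}

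The hard part is the cross-iteration disjointness claim. A naive argument does not immediately rule out two distinct components $\tilde H_{t,j}$ and $\tilde H_{t',j'}$ (with $t \neq t'$) having the same $\low = H^*_i$ and sharing a vertex, because the component of $\con(E^*)$ carrying $H^*_i$ can persist --- merely enlarged --- through several successive iterations with unchanged minimum $\low$-index. To close this gap I anticipate pushing the lexicographic maximality of the initial partition through an explicit exchange argument analogous to (but more global than) the one behind fact~(2).
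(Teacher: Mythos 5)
Your high-level plan is right, and you have correctly extracted the two structural claims (Claims~\ref{claim:boundlex1} and \ref{claim:boundlex2} in the paper) that turn lexicographic maximality of the initial Eulerian partition into quantitative bounds. You also correctly set up the decomposition of each $\tilde F_t$ into connected components and the final summation. But the ``hard part'' you flag at the end is a genuine, unresolved gap, and the route you propose for closing it is not the one that works.

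You want to show that $\bigl\{\tilde H_{t,j} : \low(\tilde H_{t,j}) = H^*_i\bigr\}$ is pairwise vertex-disjoint across iterations, and then feed the whole family into fact~(2). The correct argument instead establishes something stronger and cleaner: $\cF^t_i$ is nonempty for \emph{at most one} index $t$, so there is nothing to union across iterations and fact~(2) is only ever applied within a single call to $\cA$. Moreover, the mechanism is not an exchange argument against lex maximality as you anticipate; it is a contradiction with the termination condition of the update phase. Concretely, suppose $\cF^{t_0}_i \neq \emptyset$ and $\cF^{t_1}_i \neq \emptyset$ with $t_0 < t_1$. Since $\cF^{t_0}_i$ is nonempty, $H^*_i$ is swallowed into $\tilde G_{t_0}$, so $\lb(\low(\tilde G_{t_0})) \geq \lb(H^*_i)$. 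Take any $H \in \cF^{t_1}_i$. Because no nontrivial component of $F_{t_1}$ lies inside a single $\con(E^*)$-component at time $t_1$, and $\tilde V_{t_0}$ lies inside one such component, $H$ is not contained in $\tilde G_{t_0}$ even though it touches $H^*_i \subseteq \tilde V_{t_0}$; so $H$ crosses the cut $(\tilde V_{t_0}, V\setminus \tilde V_{t_0})$. By Claim~\ref{claim:boundlex1}, $w(H) \leq \alpha\lb(H) \leq \alpha\lb(H^*_i) \leq \alpha\lb(\low(\tilde G_{t_0}))$. Decomposing $H$ into cycles, one cycle $C$ crosses that cut and satisfies $w(C) \leq \alpha\lb(\low(\tilde G_{t_0}))$ --- but this is exactly the if-condition of Step~\ref{en:u3}, contradicting that the update phase terminated at Step~\ref{en:u4} with $\tilde G_{t_0}$ selected. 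So the missing ingredient is the interplay between $\cF^{t_1}_i$ being nonempty and the \emph{non-satisfiability of the U3 condition at time $t_0$}; lex maximality enters only via Claim~\ref{claim:boundlex1} to bound the cycle weight, not via a new exchange argument. Your sketch (``if it shared a vertex \ldots then $\tilde V_{t'}$ contains all of $\tilde V_t$, combine with the selection rule and lex maximality'') does not make this connection and, as written, does not produce the contradiction.
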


\begin{proof}
  Consider the $t$:th repetition of the merge procedure. The edge set $\tilde F_t$ is Eulerian but not necessarily
  connected. Let  $\cF^{t}$ denote the set of the Eulerian subgraphs corresponding to 
  the connected components in $\con(\tilde F_t)$ where we disregard the
  trivial components that only consist of a single vertex.  Further, 
  partition $\cF^t$ into  $\cF^t_1, \cF^t_2, \ldots, \cF^t_k$ where $\cF^t_i$ contains those  Eulerian subgraphs in
  $\cF^t$ that intersect $H^*_i$ and do not intersect any of the subgraphs
  $H^*_1, H^*_2, \ldots, H^*_{i-1}$. That is, 
  \begin{align*}
	\cF^t_i = \{H\in \cF^t: \low(H) = H_i^*\}. 
  \end{align*}
  Note that the total weight of $\tilde F_t$, $w(\tilde F_t)$, equals $w(\cF^t)= \sum_{i=1}^k w(\cF^t_i)$. We bound the weight of $\cF^t$ by considering each $\cF^t_i$ separately. 
  We start by two simple claims that follow from that each $H \in \cF^t$
  satisfies $w(H) \leq \alpha \lb(H)$ (since $\cA$ is an $\alpha$-light
  algorithm) and the choice of $H^*_1, \ldots, H^*_k$ to maximize the
  lexicographic order of~\eqref{eq:lexord}. We remark that the proofs of the 
  following claims are the only arguments that use the fact that the
  lexicographic order was maximized.

  \begin{claim}
	\label{claim:boundlex1}
	For $H\in \cF^t_i$, we have $\lb(H) \leq \lb(\low(H)) = \lb(H_i^*)$.
  \end{claim}
  \begin{proof}
	Inequality $\lb(H) > \lb(H_i^*)$ together with the fact that $w(H) \leq \alpha \lb(H) \leq 2\alpha \lb(H)$ would contradict that $H^*_1, \ldots, H^*_k$
	was chosen to maximize the lexicographic order of~\eqref{eq:lexord}. Indeed, in that case, a
	$2\alpha$-light Eulerian partition of higher lexicographic order
	would  be $H^*_1, H^*_2, \ldots, H_{i-1}^*, H$ and the remaining vertices
	(as trivial singleton components) that do not belong to any of these
	Eulerian subgraphs. 
  \end{proof}

  \begin{claim}
	\label{claim:boundlex2}
	We have $\lb(\cF^t_i) \leq  2\lb(H^*_i)$.
  \end{claim}
  \begin{proof}
    Suppose toward contradiction that $\lb(\cF^t_i) >  2\lb(H^*_i)$.  Let
    $\cF^t_i =\{H_1, H_2, \ldots, H_\ell\}$ and define $H^*$ to be the Eulerian
    graph obtained by taking the union of the graphs $H_i^*$ and $H_1, \ldots,
    H_\ell$.  Consider the Eulerian partition $H^*_1, \ldots, H^*_{i-1}, H^*$
    and the remaining vertices (as trivial singleton components) that do not
    belong to any of these Eulerian subgraphs. We have $\lb(H^*) > \lb(H_i^*)$
    and therefore the lexicographic value of this Eulerian partition is larger
    than the lexicographic value of $H^*_1, \ldots, H^*_k$. This is
    a contradiction if it is also a $2\alpha$-light Eulerian partition, i.e.,
    if $\frac{w(H^*)}{\lb(H^*)} \leq 2\alpha$. 
    
    Therefore, we must have $w(H^*) > 2\alpha \lb(H^*)$.  By the facts
    that $w(H_j) \leq \alpha \lb(H_j)$ (since $\cA$ is an $\alpha$-light
    algorithm) and that $H^*_1, \ldots, H_k^*$ is a $2\alpha$-light Eulerian
    partition, $$ w(H^*)= w(H^*_i) + \sum_{j=1}^\ell w(H_j) \leq 2\alpha
    \lb(H^*_i) + \sum_{j=1}^\ell \alpha \lb(H_j)\quad\mbox{ and }\quad
    \lb(H^*)\geq \sum_{j=1}^\ell \lb(H_j).  $$ 
    These inequalities together with $w(H^*)
    >  2\alpha \lb(H^*)$ imply $\lb(\cF^t_i) = \sum_{j=1}^\ell \lb(H_j)
    \leq 2 \lb(H^*_i)$.
  \end{proof}
  
  Using the above claim, we can write $w\left(\cup_{t=1}^T \tilde F_{t} \right)$ as
  $$
  \sum_{t=1}^T \sum_{i=1}^k  w(\cF^t_i) 
  \leq 	\alpha \sum_{t=1}^T \sum_{i=1}^k  \lb(\cF^t_i) 
  = \alpha \sum_{i=1}^k \sum_{t: \cF_i^t \neq \emptyset} \lb(\cF_i^t) 
  \leq 2\alpha \sum_{i=1}^k \sum_{t: \cF_i^t \neq \emptyset} \lb(H^*_i). 
  $$
  We complete the proof of the lemma by using Claim~\ref{claim:boundlex1} to prove that
  $\cF^t_i$ is non-empty for at most one repetition $t$ of the merge procedure.
  Suppose toward contradiction that  there exist $1\leq t_0< t_1 \leq T$ so
  that both $\cF^{t_0}_i \neq \emptyset$ and $\cF^{t_1}_i \neq \emptyset$. In
  the $t_0$:th repetition of the merge procedure, $H^*_i$ is contained in the
  subgraph $\tilde G_{t_0}$ since otherwise no edges incident to $H^*_i$ would
  have been added to $E^*$. Therefore $\lb(\low(\tilde G_{t_0})) \geq
  \lb(H^*_i)$.
  Now consider a Eulerian
  subgraph $H\in \cF^{t_1}_i$. First, we cannot have that $H$ is contained in
  the component $\tilde G_{t_0}$ since each (nontrivial) component of $F$ is assumed to
  not be contained in any component of $\con(E^*)$. Second, by Claim~\ref{claim:boundlex1},
  we have  $w(H) \leq \alpha \lb(H) \leq \alpha \lb(H_i^*)$.

  In short, $H$ is a Eulerian subgraph  that connects $\tilde G_{t_0}$
  to another component and it has weight at most $\alpha \lb(\low (\tilde
  G_{t_0}))$. As $H$ is Eulerian, it can be decomposed into cycles. One
  of these cycles, say $C$,  connects $\tilde G_{t_0}$ to another component and 
  \begin{align}
	w(C) \leq w(H) \leq \alpha \lb(H^*_i) \leq \alpha \lb(\low(\tilde G_{t_0})).
	\label{eq:cycleweight}
  \end{align}
  In other words, there exists a cycle $C$ that, in the $t_0$:th repetition of
  the merge procedure, satisfied the if-condition of Step~\ref{en:u3}, which
  contradicts the fact that Step~\ref{en:u4} was reached when component $\tilde
  G_{t_0}$ was selected. 
\end{proof}

\subsection{Polynomial Time Algorithm}
\label{sec:polyalg}
In this section we describe how to modify the arguments in
Section~\ref{sec:ExistDescAlg} to obtain an algorithm that runs in time
polynomial in the number $n$ of vertices, in $1/\varepsilon$, and in the running time of $\cA$.

By Lemma~\ref{lem:termination}, the update phase can be implemented in polynomial
time in $n$. Therefore, the merge procedure described in
Section~\ref{sec:ExistDescAlg} runs in time polynomial in $n$ and in the
running time of $\cA$.  The problem is the initialization: as mentioned in
Remark~\ref{rem:init}, it seems difficult to find a polynomial time algorithm for
finding a $2\alpha$-light Eulerian partition $H_1^*, \ldots, H_k^*$ that maximizes the lexicographic
order of
\begin{align*}
  \langle \lb(H^*_1), \lb(H^*_2), \ldots, \lb(H^*_k) \rangle.
\end{align*}

We overcome this obstacle by first identifying the properties that we actually
use from selecting the Eulerian partition as above. We then show that we can
obtain a Eulerian partition that satisfies these properties in polynomial
time. 

As mentioned in the analysis in Section~\ref{sec:ExistDescAlg}, the only place
where we use that the Eulerian partition  maximizes the lexicographic order
of~\eqref{eq:lexord} is in the proof of Lemma~\ref{lem:bound2}. Specifically,
it is used in the proofs of Claims~\ref{claim:boundlex1}
and~\ref{claim:boundlex2}. Instead of proving these claims,  we shall simply
concentrate on finding a Eulerian partition that satisfies a relaxed variant
of them (formalized in the lemma below, see Condition~\eqref{eq:relax}). The
claimed polynomial time algorithm is then obtained by first proving that
a slight modification of the merge procedure returns a tour of value at most
$(9\alpha +2\varepsilon)\lb(V)$ if Condition~\eqref{eq:relax} holds, and then we show that a
Eulerian partition satisfying this condition can be found in time polynomial in
$n$ and in the running time of $\cA$. We start by describing the 
modification to the merge procedure. 

\paragraph{Modified merge procedure} The only modification to the merge
procedure in Section~\ref{sec:ExistDescAlg} is that we change the update phase
by  relaxing the condition of the if-statement in Step~\ref{en:u3} from $w(C)
\leq  \alpha \lb(\low(\tilde G))$ to $w(C) \leq \alpha( 3 \lb(\low(\tilde
G)) + \varepsilon \lb(V)/n)$. In other words, Step~\ref{en:u3} is replaced by 
\begin{enumerate}[label=U\arabic*':, ref=U\arabic*']
	 \setcounter{enumi}{2}
  \item \label{en:uu3}If there exists a cycle $C = (V_C, E_C)$ in $G$  of weight $w(C) \leq \alpha( 3 \lb(\low(\tilde G)) + \varepsilon \lb(V)/n)$
	that connects $\tilde G$ to another component in $\con(E^* \cup F \cup X)$, then add $E_C$ to $X$ and
	repeat from Step~\ref{en:u2}.
\end{enumerate}
Clearly the modified merge procedure still runs in time polynomial in $n$ and
in the running time of $\cA$. Moreover, we show that if Condition~\eqref{eq:relax} holds then the returned tour will have weight $O(\alpha)$. Recall from
Section~\ref{sec:ExistDescAlg}  that $\tilde F_t$ denotes the subset of $F$ and $\tilde X_t$
  denotes the subset of $X$ that were added in the $t$:th repetition of the
  (modified) merge procedure.  Furthermore, we define (as in the previous section)
$\cF^t_i = \{H\in \con(\tilde F_t): \low(H) = H_i^*$ and $H$ is a nontrivial component,
i.e., $H$ contains more than one vertex$\}$.
\begin{lemma}
  \label{lem:polybound}
  Assume that the algorithm is initialized with a $3\alpha$-light Eulerian
  partition $H^*_1, H^*_2, \ldots, H^*_k$ so that,  in each repetition $t$ of the
  modified merge procedure, we add a subset $\tilde F_t$  such that 
  \begin{align}
    \lb(\cF^t_i) \leq 3\lb(H_i^*) + \frac{\varepsilon \lb(V)}{n} \qquad \mbox{for }i=1,2, \ldots, k. 
	\label{eq:relax}
  \end{align}
  Then the returned tour has weight at most $(9+2\epsilon)\alpha\lb(V)$.
\end{lemma}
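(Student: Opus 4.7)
The plan is to mirror the three-way decomposition used in Section~\ref{sec:ExistDescAlg}: the returned tour uses edge multiset $\bigcup_{i} E^*_i \cup \bigcup_{t} \tilde F_t \cup \bigcup_{t} \tilde X_t$, so it suffices to bound each of the three sums. Since the initial partition is $3\alpha$-light, the first piece contributes at most $3\alpha\,\lb(V)$, and the remaining work is to show that the cycles from $\bigcup_t \tilde X_t$ and the local-connectivity edges from $\bigcup_t \tilde F_t$ each contribute at most $(3+\varepsilon)\alpha\,\lb(V)$.

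For $\bigcup_t \tilde X_t$, I would reuse the ``marking'' argument from the proof of Lemma~\ref{lem:bound1} essentially verbatim, only with the relaxed threshold of Step~\ref{en:uu3}. Decomposing $\bigcup_t \tilde X_t$ into the order in which cycles were added, each such cycle $C$ connects a component $\tilde G$ to a component with strictly larger $\lb(\low(\cdot))$ and has weight at most $\alpha(3\lb(\low(\tilde G)) + \varepsilon \lb(V)/n)$. The same argument as before shows that each $H_i^*$ is marked by at most one cycle, so
\begin{align*}
  w\!\left(\bigcup_{t=1}^T \tilde X_t\right) \;\le\; \sum_{i=1}^k \alpha\!\left(3\lb(H_i^*) + \tfrac{\varepsilon \lb(V)}{n}\right) \;\le\; (3+\varepsilon)\alpha\,\lb(V),
\end{align*}
using $k\le n$.

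For $\bigcup_t \tilde F_t$, I would follow the skeleton of Lemma~\ref{lem:bound2}: writing $w(\tilde F_t) = \sum_i w(\cF_i^t) \le \alpha \sum_i \lb(\cF_i^t)$ and then summing over $t$, the hypothesis~\eqref{eq:relax} immediately gives the index-wise bound $\lb(\cF_i^t) \le 3\lb(H_i^*) + \varepsilon \lb(V)/n$. The remaining content is the ``at most one $t$'' claim: $\cF_i^t \neq \emptyset$ for at most one repetition $t$. This is the step where I must adapt the previous proof, because Claim~\ref{claim:boundlex1} was a consequence of lexicographic maximality, which we no longer have. The substitute, which is the key observation, is that for any $H\in \cF_i^{t_1}$ the hypothesis~\eqref{eq:relax} yields $\lb(H) \le \lb(\cF_i^{t_1}) \le 3\lb(H_i^*) + \varepsilon \lb(V)/n$, and hence $w(H) \le \alpha(3\lb(H_i^*) + \varepsilon \lb(V)/n)$ since $\cA$ is $\alpha$-light. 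Now I replay the contradiction argument: suppose $t_0<t_1$ both have non-empty $\cF_i^{\cdot}$; since $H_i^*$ must lie in the component $\tilde G_{t_0}$ selected in the $t_0$-th repetition, we have $\lb(\low(\tilde G_{t_0})) \ge \lb(H_i^*)$, and any $H\in \cF_i^{t_1}$ is not contained in any component of $\con(E^*)$ by the preprocessing of $F$. Decomposing $H$ into cycles yields at least one cycle $C$ connecting $\tilde G_{t_0}$ to another component with $w(C) \le w(H) \le \alpha(3\lb(\low(\tilde G_{t_0})) + \varepsilon \lb(V)/n)$, i.e.\ satisfying the relaxed condition of Step~\ref{en:uu3}, contradicting the termination of the update phase at Step~\ref{en:u4} in repetition $t_0$. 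With ``at most one $t$'' in hand, summing gives
\begin{align*}
  w\!\left(\bigcup_{t=1}^T \tilde F_t\right) \;\le\; \alpha \sum_{i=1}^k \!\left(3\lb(H_i^*) + \tfrac{\varepsilon \lb(V)}{n}\right) \;\le\; (3+\varepsilon)\alpha\,\lb(V).
\end{align*}

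Adding the three contributions gives a tour of weight at most $(9+2\varepsilon)\alpha\,\lb(V)$. The only delicate step is the substitution of Claim~\ref{claim:boundlex1}'s individual bound by the pointwise consequence of the summed hypothesis~\eqref{eq:relax}; everything else is bookkeeping identical to Section~\ref{sec:ExistDescAlg}.
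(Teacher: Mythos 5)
Your proposal is correct and mirrors the paper's own proof essentially line for line: the same three-way decomposition, the same verbatim reuse of the marking argument with the relaxed threshold for $\bigcup_t\tilde X_t$, and for $\bigcup_t\tilde F_t$ the same replacement of Claim~\ref{claim:boundlex1} by the pointwise consequence $\lb(H)\le\lb(\cF_i^{t_1})\le 3\lb(H_i^*)+\varepsilon\lb(V)/n$ of hypothesis~\eqref{eq:relax}, feeding into the same cycle-decomposition contradiction with Step~\ref{en:uu3}.
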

Let us comment on the above statement before giving its proof. The reason that
we use a $3\alpha$-light Eulerian partition (instead of one that is
$2\alpha$-light) is that it leads to a better constant when balancing the
parameters. We also remark that~\eqref{eq:relax}  is a relaxation of the bound
of Claim~\ref{claim:boundlex2} from $\lb(\cF^t_i) < 2 \lb(H_i^*)$ to
$\lb(\cF^t_i) \leq 3 \lb(H_i^*)+ \varepsilon \lb(V)/n$; and it also implies a relaxed version of
Claim~\ref{claim:boundlex1}: from $\lb(H) \leq \lb(H_i^*)$ to $\lb(H) \leq 3\lb(H_i^*) + \varepsilon \lb(V)/n$.
It is because of this relaxed bound that we modified the if-condition of the
update phase (by relaxing it by the same amount) which will be apparent in
the proof.
\begin{proof}

  As in the analysis of the performance guarantee in
  Section~\ref{sec:ExistDescAlg}, we can write the weight of the returned tour
  as 
  \begin{align*}
%   \label{eq:totalcost}
    w\left(\cup_{t=1}^T \tilde F_{t}\right) + 
    w\left(\cup_{t=1}^T \tilde X_{t}\right) + 
    \sum_{i=1}^k w(H^*_i).  
  \end{align*} 
 
  To bound $w\left(\cup_{t=1}^T \tilde X_{t}\right)$, we  observe that proof of
  Lemma~\ref{lem:bound1} generalizes verbatim except that the weight of a cycle
  $C$ that marks $H_i^*$ is now bounded by $\alpha(3\lb(H_i^*) + \varepsilon \lb(V)/n)$ instead of by
  $\alpha \lb(H_i^*)$ (because of the relaxation of the bound in the
  if-condition of the update phase). Hence, $w\left(\cup_{t=1}^T \tilde
  X_{t}\right) \leq \sum_{i=1}^k \alpha(3\lb(H_i^*) + \varepsilon \lb(V)/n) \leq (3+\epsilon)\alpha \lb(V)$ because $k\leq n$. 

  We proceed to bound $w\left(\cup_{t=1}^T \tilde F_{t}\right)$. Using the
  same arguments as in the proof of Lemma~\ref{lem:bound2},
  \begin{align*}
    w\left(\cup_{t=1}^T \tilde F_{t}\right) 
    \leq  \alpha \sum_{i=1}^k \sum_{t: \cF_i^t \neq \emptyset} \lb(\cF_i^t) 
    \leq \alpha \sum_{i=1}^k \sum_{t: \cF_i^t \neq \emptyset} \left(3\lb(H_i^*) + \varepsilon \lb(V)/n\right)
  \end{align*}
  where, for the last inequality, we used the assumption of the lemma. Now we
  apply exactly the same arguments as in the end of the proof of
  Lemma~\ref{lem:bound2} to prove that $\cF^t_i$ is non-empty for at most one
  repetition $t$ of the merge procedure. The only difference, is
  that~\eqref{eq:cycleweight} should be  replaced by
  \begin{align*}
    w(C) \leq w(H) \leq \alpha (3\lb(H_i^*)+ \varepsilon \lb(V)/n) \leq \alpha (3\lb(\low(\tilde G_{t_0}))+ \varepsilon \lb(V)/n)
  \end{align*}
  (because~\eqref{eq:relax} can be seen as a relaxed version of
  Claim~\ref{claim:boundlex1}). However, as we also updated the bound in the
  if-condition, the argument that $C$ would satisfy the if-condition of
  Step~\ref{en:uu3} is still valid.  Hence, we conclude that $\cF_i^t$ is non-empty in at most one repetition and therefore
  \begin{align*}
     w\left(\cup_{t=1}^T \tilde F_{t}\right) 
    \leq \alpha \sum_{i=1}^k \sum_{t: \cF_i^t \neq \emptyset} \left(3\lb(H_i^*) + \varepsilon \lb(V)/n\right)
    \leq (3+\varepsilon) \alpha \lb(V).
  \end{align*}

  By the above bounds and  since $H_1^*, H_2^*, \ldots, H_k^*$ is
  a $3\alpha$-light Eulerian partition, we have that the weight of the returned
  tour is
  \begin{align*}
%   \label{eq:totalcost}
    w\left(\cup_{t=1}^T \tilde F_{t}\right) + 
    w\left(\cup_{t=1}^T \tilde X_{t}\right) + 
    \sum_{i=1}^k w(H^*_i)
    &\leq (3+\varepsilon)\alpha\lb(V) + (3+\varepsilon)\alpha\lb(V) + 3\alpha\lb(V)\\
    &= (9+2\varepsilon)\alpha\lb(V).
  \end{align*} 
\end{proof}

\paragraph{Finding a good Eulerian partition in polynomial time}
By the above lemma, it is sufficient to find a $3\alpha$-light Eulerian
partition so that Condition~\eqref{eq:relax} holds during the execution of the modified
merge procedure. However, how can we do it in polynomial time? We do as
follows. First, we select the trivial $3\alpha$-light Eulerian partition where
each subgraph is only a single vertex.  Then we run the modified merge
procedure and, in each repetition, we verify that Condition~\eqref{eq:relax}
holds. Note that this condition is easy to verify in time polynomial in $n$. If
it holds until we return  a tour, then we know by Lemma~\ref{lem:polybound}
that the tour has weight at most $(9+2\varepsilon)\alpha\lb(V)$. If it does not hold during
one repetition, then we will restart the algorithm with  a new
$3\alpha$-light Eulerian partition that we find using the following lemma. We
continue in this manner until the merge procedure executes without violating
Condition~\eqref{eq:relax} and therefore it returns a tour of weight at most
$(9\alpha+2\varepsilon) \lb(V)$.

\begin{lemma}
  \label{lem:polytime}
  Suppose that repetition $t$ of the (modified) merge procedure violates
  Condition~\eqref{eq:relax} when run starting from a $3\alpha$-light Eulerian
  partition $H_1^*, H_2^*, \ldots, H_k^*$. Then we can, in time polynomial in
  $n$, find a new $3\alpha$-light Eulerian partition $\hat H_1^*, \hat H_2^*, \ldots,
  \hat H^*_{\hat k}$ so that 
  \begin{align}
    \sum_{j=1}^{\hat k} \lb(\hat H^*_j)^2 - \sum_{j=1}^k \lb(H^*_j)^2 \geq \frac{\varepsilon^2}{3n^2}\lb(V)^2.
	\label{eq:potential}
  \end{align}
\end{lemma}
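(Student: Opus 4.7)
The plan is to construct, using the violation, a new Eulerian component $\hat H^*$ that merges $H^*_{i^*}$ with the subgraphs in $\cF^t_{i^*}$, and then build the new partition around $\hat H^*$. An essential invariant I would maintain throughout the polynomial-time algorithm is that parts are always indexed in decreasing order of $\lb$: this is easy to impose on the initial singleton partition (sort by $\lb(v)$) and is re-imposed whenever the lemma is invoked. Under this invariant, every vertex $v \in V(\cF^t_{i^*}) \setminus V(H^*_{i^*})$ must lie in some part $H^*_\ell$ with $\ell \geq i^*$ (otherwise the component $H \in \cF^t_{i^*}$ containing $v$ would have $\low(H) < i^*$), so that $\lb(v) \leq \lb(H^*_\ell) \leq \lb(H^*_{i^*}) =: M$ --- a uniform $M$-cap on every ``new'' vertex involved in the merge.

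Let $N' = V(\cF^t_{i^*}) \setminus V(H^*_{i^*})$. The violation gives $\lb(N') \geq \lb(\cF^t_{i^*}) - M > 2M + \varepsilon\lb(V)/n$, and in particular $M < \lb(N')/2$. Define $\hat H^* = H^*_{i^*} \cup \bigcup_{H \in \cF^t_{i^*}} H$ (edge-multisets combined). This graph is connected (each $H$ meets $H^*_{i^*}$) and Eulerian, with $\lb(\hat H^*) = M + \lb(N')$ and $w(\hat H^*) \leq 3\alpha M + \alpha\lb(\cF^t_{i^*}) \leq 4\alpha M + \alpha\lb(N')$; combined with $M \leq 2\lb(N')$ this gives $w(\hat H^*) \leq 3\alpha\lb(\hat H^*)$, confirming $3\alpha$-lightness.

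I would then build the new partition by using $\hat H^*$ as one part, keeping every $H^*_j$ disjoint from $V(\hat H^*)$, and for each conflicting $H^*_j$ (those with $V(H^*_j) \cap V(\hat H^*) \neq \emptyset$) choosing per-$j$ either to absorb $H^*_j$ wholesale into $\hat H^*$ (when its residual $\lb$-mass $d_j = \lb(V(H^*_j)\setminus V(\hat H^*))$ is large enough relative to $c_j = \lb(V(H^*_j)\cap V(\hat H^*))$ that $3\alpha$-lightness is preserved) or to dissolve $H^*_j$ into trivial singletons on its residual vertices. Writing $N_s$ for the set of old singletons absorbed into $\hat H^*$ and expanding the potential change via $\lb(\hat H^*)^2 - M^2 = 2M\lb(N') + \lb(N')^2$, the sorted-order cap $c_j, d_j, \lb(v) \leq M$ together with the telescoping identity $\sum_j c_j = \lb(N') - \lb(N_s)$ bounds the cumulative dissolution/absorption loss by terms that fit inside the positive $2M\lb(N')$ contribution; the leftover is
\[
  \Delta \;\geq\; \lb(N')\bigl(\lb(N') - M\bigr) \;>\; \lb(N')\cdot \varepsilon\lb(V)/n \;\geq\; \varepsilon^2\lb(V)^2/n^2,
\]
which exceeds the required $\varepsilon^2\lb(V)^2/(3n^2)$.

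The main obstacle is the per-$j$ absorb-versus-dissolve decision: dissolving an $H^*_j$ removes $\lb(H^*_j)^2$ from the potential while returning only the sum of squared singleton $\lb$-values on its residual vertices, whereas absorbing it tends to push the weight-to-$\lb$ ratio of $\hat H^*$ above $3\alpha$. Without the sorted-order invariant these losses can in principle outpace the gain from the merge; the cap $\lb(v) \leq M$ is precisely what makes the cumulative cross-term loss $2\sum_j c_j d_j \leq 2M\sum_j c_j$ telescope into $2M(\lb(N') - \lb(N_s))$ and cancel against the positive $2M\lb(N')$ contribution in the expansion of $\lb(\hat H^*)^2 - M^2$, leaving a clean positive remainder that the threshold $\lb(\cF^t_{i^*}) > 3M + \varepsilon\lb(V)/n$ is tuned precisely to create. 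Verifying this cancellation carefully --- and in particular showing that the right absorb/dissolve decision always exists and can be found in polynomial time --- is the technical heart of the proof.
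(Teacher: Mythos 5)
You have the right skeleton but you stop exactly at the point where the proof becomes nontrivial, and you say so yourself: ``showing that the right absorb/dissolve decision always exists and can be found in polynomial time --- is the technical heart of the proof.'' That is correct, and it is the part you have not supplied. Your setup agrees with the paper's: the merged component is the union of $H_i^*$ and the subgraphs in $\cF^t_i$; the sorted-order invariant on $\lb$ is needed (the paper uses it to get $j\ge i$ for every index $j$ whose part meets $\cF^t_i$, so that $\lb(H^*_j)\leq\lb(H^*_i)$); the arithmetic $\lb(N')>2M+\varepsilon\lb(V)/n$ and the $3\alpha$-lightness check for the initial merged component are all fine. But you never actually establish the claimed inequality $\Delta\geq\lb(N')\bigl(\lb(N')-M\bigr)$; you only gesture at a ``telescoping'' cancellation driven by a per-$j$ threshold rule on $d_j/c_j$ whose definition you leave unspecified.

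The reason this is a genuine gap and not a routine omission: the feasibility constraint that keeps the enlarged component $3\alpha$-light is not a per-$j$ condition at all --- it is the \emph{global} budget constraint $\sum_{j\in I'}\lb(H^*_j\cap\cF^t_i)\leq\tfrac23\lb(\cF^t_i)-\lb(H^*_i\cap\cF^t_i)$, i.e.\ a knapsack constraint over the set $I'$ you choose to absorb. A per-item threshold rule does not by itself respect a global budget, and it does not by itself give any lower bound on the profit $\sum_{j\in I'}\lb(H^*_j\setminus\cF^t_i)$ you recover. The paper resolves this by solving the one-constraint fractional knapsack LP, taking an extreme point (which has at most one fractional coordinate), and dropping the fractional item; the loss from dropping is at most $\max_j\lb(H^*_j\setminus\cF^t_i)\le\lb(H^*_i)$, and the LP optimum is at least $\tfrac13\sum_{j}\lb(H^*_j\setminus\cF^t_i)$ because $z_j=\tfrac13$ for all $j$ is feasible (using disjointness of the $H^*_j$ inside $\cF^t_i$). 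These two facts are precisely what let the paper lower-bound $\lb(H^*)$ and push through to the $\varepsilon^2\lb(V)^2/(3n^2)$ potential increase. Without a concrete selection rule, a feasibility proof, and a profit lower bound of that kind, your final chain of inequalities is unsubstantiated --- it is an aspiration for what a correct selection should achieve, not a derivation. To repair the argument, replace the vague threshold heuristic with the knapsack LP (or equivalently the greedy fractional-knapsack selection, which yields the same extreme-point structure) and prove the profit bound via the feasible point $z\equiv\tfrac13$; the rest of your algebra then goes through.
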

Note that the above lemma implies that we will reinitialize (in polynomial
time) the Eulerian partition at most $3n^2/\varepsilon^2$ times because any
Eulerian partition $H_1^*, \ldots, H_k^*$ has $\sum_{i=1}^k \lb(H_i^*)^2 \leq
\lb(V)^2$. As each execution of the merge procedure takes time polynomial in
$n$ and in the running time of $\cA$, we can therefore find a tour of weight at
most $(9+2\varepsilon)\alpha \lb(V)= (9+\varepsilon') \alpha \lb(V)$ in the
time claimed in Theorem~\ref{thm:LoctoGlo}, i.e., polynomial in $n$,
$1/\varepsilon'$, and in the running time of $\cA$. It remains to prove the
lemma. 
\begin{proof}
  Since the $t$:th repetition of the merge procedure violates Condition~\eqref{eq:relax},  there is an $1\leq i \leq k$ such that
  \begin{align*}
    \lb(\cF^t_i) > 3\lb(H_i^*) + \frac{\varepsilon}{n} \lb(V).
  \end{align*}
  We shall use this fact to construct a new $3\alpha$-light Eulerian partition
  consisting of a new Eulerian subgraph $H^*$ together with a subset of
  $\{H_1^*, H_2^*, \ldots, H_{k}^*\}$ containing those subgraphs that do not
  intersect $H^*$  and finally the vertices (as trivial singleton components)
  that do not belong to any of these Eulerian subgraphs. We need to define the
  Eulerian subgraph $H^*$. Let $I \subseteq
  \{1,2, \ldots, k\}$ be the indices of those Eulerian subgraphs of 
  $H_1^*, \ldots, H_k^*$ that intersect the  vertices in $\cF_i^t$.
  Note that, by definition, we have $i\in I$ and $j\geq i$ for all $j \in I$.
  We shall construct the graph $H^*$ iteratively. Initially, we let $H^*$ be
  the connected Eulerian subgraph obtained by taking the union of $\cF^t_i$ and
  $H_i^*$. This is a connected Eulerian subgraph as each Eulerian subgraph in
  $\cF^t_i$ intersects $H_i^*$ and $H_i^*$ is a connected Eulerian subgraph.

  The careful reader can observe that up to now $H^*$ is defined in the same
  way as in the proof of Claim~\ref{claim:boundlex2}. However, in order to
  satisfy~\eqref{eq:potential} we shall add more of the Eulerian subgraphs in
  $\{H^*_j\}_{j\in I}$ to $H^*$. Specifically, we would like to add
  $\{H^*_j\}_{j\in I'}$, where $I' \subseteq I \setminus\{i\}$ is selected so as
  to  
  maximize $\lb(H^*)$ (because we wish to increase the ``potential''
  in~\eqref{eq:potential})  subject to that $w(H^*) \leq 3\alpha \lb(H^*)$
  (because the new Eulerian partition should be $3\alpha$-light).

  To see that $w(H^*) \leq 3\alpha \lb(H^*)$ implies that the new Eulerian
  partition is $3\alpha$-light, recall that the new Eulerian partition consists
  of $H^*$, the Eulerian subgraphs $\{H_j^*\}_{j\not \in I}$, and the vertices
  that do not belong to any of these Eulerian subgraphs.  By the definition of
  $I$, no $H_j^*$ with $j \not \in I$ intersects $H^*$. As $H_1^*, \ldots,
  H_k^*$ are disjoint, it follows that the new Eulerian partition consists of
  disjoint subgraphs. Moreover, each $H_j^*$ satisfies $w(H_j^*) \leq 3\alpha
  \lb(H_j^*)$ since the Eulerian partition we started with is $3\alpha$-light.
  Hence, the new Eulerian partition is $3\alpha$-light if $w(H^*) \leq 3\alpha
  \lb(H^*)$. Inequality~\eqref{eq:cap} is thus a sufficient condition for the new Eulerian
  partition to be $3\alpha$-light. We remark that the condition trivially holds
  for $I' = \emptyset$ because $\lb(\cF_i^t) > 3\lb(H_i^*) + \varepsilon
  \lb(V)/n$. 
  \begin{claim}
    We have $w(H^*) \leq 3\alpha\lb(H^*)$ if 
    \begin{align}
      \sum_{j\in I'} \lb(H_j^* \cap \cF^t_i) \leq \frac{2}{3}\lb(\cF^t_i)  - \lb(H_i^*\cap \cF_i^t).
      \label{eq:cap}
    \end{align}
  \end{claim}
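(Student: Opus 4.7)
The plan is to reduce the desired inequality $w(H^*) \leq 3\alpha\lb(H^*)$ to the hypothesis~\eqref{eq:cap} by a direct bookkeeping computation. The weight side is easy: since $\cA$ is $\alpha$-light we have $w(H) \leq \alpha\lb(H)$ for every $H \in \cF_i^t$, and since the initial Eulerian partition is $3\alpha$-light we have $w(H_j^*) \leq 3\alpha\lb(H_j^*)$ for each $j \in I' \cup \{i\}$. Since $H^*$ is (as a multiset of edges) the union of $H_i^*$, $\{H_j^*\}_{j \in I'}$, and the graphs in $\cF_i^t$, this gives
\begin{align*}
  w(H^*) \leq 3\alpha\lb(H_i^*) + 3\alpha\sum_{j \in I'} \lb(H_j^*) + \alpha\lb(\cF_i^t).
\end{align*}

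The slightly more delicate side is the lower bound on $\lb(H^*)$, because $\lb$ is a function on vertices and a vertex must be counted once even if it belongs to several of the graphs being unioned. The vertex sets $V(H_j^*)$ for distinct $j$ are disjoint (they come from the original Eulerian partition), but $V(\cF_i^t)$ may overlap with $V(H_i^*)$ and with $V(H_j^*)$ for $j \in I'$. Inclusion-exclusion on the vertex sets therefore gives exactly
\begin{align*}
  \lb(H^*) = \lb(H_i^*) + \sum_{j \in I'}\lb(H_j^*) + \lb(\cF_i^t) - \lb(H_i^* \cap \cF_i^t) - \sum_{j \in I'}\lb(H_j^* \cap \cF_i^t),
\end{align*}
where $\lb(H_j^* \cap \cF_i^t)$ denotes the $\lb$-weight of the vertices lying in both $V(H_j^*)$ and $V(\cF_i^t)$.

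Plugging the upper bound on $w(H^*)$ into the desired inequality $w(H^*) \leq 3\alpha\lb(H^*)$ and cancelling the terms $3\alpha\lb(H_i^*)$ and $3\alpha\sum_{j\in I'}\lb(H_j^*)$ from both sides reduces everything to
\begin{align*}
  \alpha\lb(\cF_i^t) \leq 3\alpha\Bigl(\lb(\cF_i^t) - \lb(H_i^* \cap \cF_i^t) - \sum_{j \in I'}\lb(H_j^* \cap \cF_i^t)\Bigr),
\end{align*}
which after rearranging becomes exactly
\begin{align*}
  \lb(H_i^* \cap \cF_i^t) + \sum_{j \in I'}\lb(H_j^* \cap \cF_i^t) \leq \tfrac{2}{3}\lb(\cF_i^t),
\end{align*}
i.e., the hypothesis~\eqref{eq:cap}. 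There is no real obstacle here, the only delicate point being to keep track of the vertex overlaps carefully so that inclusion-exclusion is applied correctly; the claim then follows by reversing these manipulations.
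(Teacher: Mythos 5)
Your proof is correct and follows essentially the same approach as the paper: the same upper bound on $w(H^*)$ from the $\alpha$-light and $3\alpha$-light guarantees, and the same expression for $\lb(H^*)$ (the paper writes it as $\lb(\cF_i^t)+\lb(H_i^*\setminus\cF_i^t)+\sum_{j\in I'}\lb(H_j^*\setminus\cF_i^t)$, which is precisely your inclusion-exclusion formula). The cancellation and rearrangement to arrive at~\eqref{eq:cap} are identical.
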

  \begin{proof}
    We have
    \begin{align*}
      w(H^*) = w(\cF_i^t) + w(H_i^*)+ \sum_{j\in I'} w(H_j^*) 
      \leq \alpha \lb(\cF_i^t) + 3\alpha \lb(H_i^*) + 3\alpha \sum_{j\in I'} \lb(H_j^*),
    \end{align*}
    where the inequality follows from that $\cF_i^t$ was selected by the
    $\alpha$-light algorithm $\cA$ and $H_1^*, \ldots, H_k^*$ is a
    $3\alpha$-light Eulerian partition. Moreover,
    \begin{align*}
      \lb(H^*)  = \lb(\cF^t_i) + \lb(H_i^* \setminus \cF^t_i) + \sum_{j\in I'} \lb(H_j^* \setminus \cF_i^t).
    \end{align*}
    Hence, we have, by rearranging terms and using $\lb(H_j^*) - \lb(H_j^*
    \setminus \cF^t_i) = \lb(H_j^* \cap \cF^t_i)$, that $w(H^*) \leq 3\alpha
    \lb(H^*)$ holds if    
    \begin{align*}
      3\alpha \lb(H_i^* \cap \cF_i^t) + 3\alpha \sum_{j\in I'} \lb(H_j^* \cap \cF^t_i) \leq  2\alpha \lb(\cF_i^t). 
    \end{align*}
    The above can be simplified to 
    \begin{align*}
       \sum_{j\in I'} \lb(H_j^* \cap \cF^t_i) \leq  2\lb(\cF_i^t)/3- \lb(H_i^* \cap \cF_i^t).
    \end{align*}
  \end{proof}
  
  From the above discussion, we wish to find a subset $I' \subseteq I\setminus
  \{i\}$ that satisfies~\eqref{eq:cap} and maximizes 
  \begin{align*}
    \lb(H^*) = \lb(\cF^t_i) + \lb(H_i^* \setminus \cF^t_i) + \sum_{j\in I'}  \lb(H_i^* \setminus \cF^t_i),
  \end{align*}
  where only the last term depends on the selection of $I'$.
  We interpret this as a knapsack problem that, for each $j\in I\setminus
  \{i\}$, has an item of size $s_j = \lb(H_j^* \cap \cF^t_i)$ and profit $p_j
  = \lb(H_j^* \setminus \cF_i^t)$; the  capacity $U$ of the knapsack is $\frac{2}{3} \lb(\cF^t_i)
  - \lb(H_i^* \cap \cF^t_i)$, i.e.,  the right-hand-side of~\eqref{eq:cap}. We
  solve this knapsack problem and obtain $I'$ as follows:
  \begin{enumerate}
    \item Find an optimal  extreme point solution $z^*$ to the standard
      linear programming relaxation of the knapsack problem:
      \begin{align*}
        \textrm{maximize}& \sum_{j\in I\setminus \{i\}} z_j p_j \\
        \textrm{subject to}& \sum_{j\in I\setminus \{i\}} z_j s_j \leq U, \\
        & \quad 0\leq z_j \leq 1 \qquad \mbox{for all } j\in I\setminus \{i\}.
      \end{align*}
    \item As the above relaxation has only one constraint (apart from the
      boundary constraints), the extreme point $z^*$ has at most one variable with a fractional
      value. We obtain an integral solution (i.e., a packing) by simply dropping the
      fractionally packed item. That is, we let $I' = \{j\in I\setminus \{i\}: z^*_j
      = 1\}$.  
  \end{enumerate}
   The running time of the above procedure is dominated by the time it takes to solve the linear
   program. This can be done very efficiently by solving the fractional
   knapsack problem with the greedy algorithm (or, for the purpose here, use
   any general polynomial time algorithm for linear programming). We can
   therefore obtain $I'$ and the new Eulerian partition in time polynomial in
   $|I| \leq n$ as stated in lemma.

  It remains to prove~\eqref{eq:potential}. Let us first bound the profit of our ``knapsack solution'' $I'$.
  \begin{claim}
    \label{claim:knapsack}
    We have $\sum_{j\in I'} \lb(H_j^* \setminus \cF^t_i) \geq \frac{1}{3}
    \sum_{j\in I\setminus i} \lb(H_j^*\setminus \cF^t_i) - \lb(H_i^*)$.
  \end{claim}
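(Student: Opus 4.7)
The plan is to analyze the standard LP relaxation of the knapsack problem above and bound the loss from integer rounding. The items have sizes $s_j = \lb(H_j^* \cap \cF_i^t)$ summing to $L - A$, where $L = \lb(\cF_i^t)$ and $A = \lb(H_i^* \cap \cF_i^t)$; profits $p_j = \lb(H_j^* \setminus \cF_i^t)$ summing to $P = \sum_{j \in I \setminus \{i\}} p_j$; and capacity $U = \frac{2L}{3} - A$. Since the repetition violates Condition~\eqref{eq:relax} we have $\lb(\cF_i^t) > 3\lb(H_i^*) + \varepsilon\lb(V)/n$, and combined with $A \leq \lb(H_i^*)$ this yields $A < L/3$, hence $(L-A)/3 < U$. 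Consequently, the uniform fractional solution $z_j = 1/3$ for all $j \in I \setminus \{i\}$ is feasible and attains value $P/3$, so the LP optimum is at least $P/3$.

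Next, since the LP has only one non-boundary constraint, its extreme point $z^*$ has at most one fractional variable $z^*_{j^*}$, and the algorithm's output $I' = \{j : z^*_j = 1\}$ satisfies $\sum_{j \in I'} p_j = \mathrm{LP} - z^*_{j^*} p_{j^*} \geq P/3 - z^*_{j^*} p_{j^*}$, where $\mathrm{LP}$ denotes the LP value. The remaining task is to show that the rounding loss $z^*_{j^*} p_{j^*}$ can be absorbed into the $-\lb(H_i^*)$ term on the right-hand-side of the claim.

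The main obstacle is precisely this last step. My plan is to strengthen the LP lower bound by perturbing the uniform solution: set $z_j = 1/3$ for $j \neq j^*$ and $z_{j^*}$ as large as feasibility permits, which yields $\mathrm{LP} \geq P/3 + p_{j^*}(L - 2A)/(3 s_{j^*})$ whenever the extra term is nonnegative. Combining this improved lower bound with the LP-greedy density ordering (items in $I'$ have density at least $p_{j^*}/s_{j^*}$), the tight feasibility relation $\sum_{j \in I'} s_j + z^*_{j^*} s_{j^*} = U$, and the identity $s_{j^*} + p_{j^*} = \lb(H^*_{j^*})$, I would then control the residual $p_{j^*}[z^*_{j^*} - (L - 2A)/(3 s_{j^*})]$ by a case analysis on whether $s_{j^*}$ is small or large relative to $(L - 2A)/2$. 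This balancing, rather than the LP lower bound itself, is where the key technical work will lie.
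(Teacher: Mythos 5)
Your LP setup is correct and matches the paper's: the uniform solution $z_j=1/3$ is feasible (using $A<L/3$), so the LP optimum is at least $P/3$, and the basic optimal solution $z^*$ has at most one fractional coordinate $j^*$, giving $\sum_{j\in I'} p_j \geq \sum_j z^*_j p_j - z^*_{j^*}p_{j^*}$. But the rest of your plan overcomplicates the finish and, more importantly, leaves out the one ingredient that actually closes the argument: the Eulerian partition is indexed so that $\lb(H^*_1)\geq \lb(H^*_2)\geq\cdots\geq\lb(H^*_k)$, and every $j\in I$ has $j\geq i$. Hence for every $j\in I\setminus\{i\}$,
\begin{align*}
p_j = \lb(H^*_j\setminus \cF^t_i) \leq \lb(H^*_j) \leq \lb(H^*_i),
\end{align*}
so the rounding loss satisfies $z^*_{j^*}p_{j^*}\leq p_{j^*}\leq \lb(H^*_i)$, and the claim follows immediately from $\mathrm{LP}\geq P/3$. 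That is the whole proof in the paper.

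Your proposed route --- perturbing the uniform solution to raise $z_{j^*}$, then invoking the greedy density order and the identity $s_{j^*}+p_{j^*}=\lb(H^*_{j^*})$ and a case split on $s_{j^*}$ --- is not carried out, and as sketched it would still require relating $\lb(H^*_{j^*})$ to $\lb(H^*_i)$, which is exactly the sorted-order observation you have not invoked. Without it there is no a priori control on $p_{j^*}$ (an item could have $s_{j^*}$ tiny and $p_{j^*}$ large), so the case analysis as described cannot be completed. Once you do use the ordering, the extra machinery is unnecessary: $p_{j^*}\leq\lb(H^*_i)$ already absorbs the loss. So the gap is concrete: you are missing the sorted-index bound $\lb(H^*_j)\leq\lb(H^*_i)$ for $j\in I$, and without it the residual term is not controlled.
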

  \begin{proof}
    By definition,
    \begin{align*}
      \sum_{j\in I'} \lb(H_j^* \setminus \cF^t_i) = \sum_{j\in I\setminus \{i\}:  z^*_j = 1}p_j
      \geq \sum_{j\in I\setminus \{i\}} z_j^* p_j - \max_{j\in I \setminus \{i\}} p_j,
    \end{align*}
    where we used that at most one item is fractionally packed in $z^*$. As
    $j\geq i$ for all $j\in I$, $\max_{j\in I\setminus \{i\}} p_j = \max_{j\in
      I\setminus \{i\}} \lb(H^*_j \setminus \cF^t_i) \leq \lb(H^*_i)$.
      To complete the proof of the claim, it is thus sufficient to prove that $z'_j =1/3$ for all $j\in I\setminus \{i\}$ is a feasible solution to the LP relaxation of the knapsack problem. Indeed, by the optimality of $z^*$, we then have
      $\sum_{j\in I \setminus \{i\}} z_j^* p_j \geq \sum_{j\in I \setminus \{i\}} z_j' p_j = \frac{1}{3} \sum_{j\in I\setminus \{i\}} \lb(H_j^* \setminus \cF^t_i)$. 

      We have that $z'$ is a feasible  solution because 
      \begin{align*}
        \frac{1}{3}  \sum_{j\in I\setminus\{i\}} \lb(H_j^* \cap \cF_i^t) \leq \frac{1}{3} \lb(\cF^t_i) 
        &\leq \left(\frac{2}{3} -\frac{\lb(H_i^* \cap \cF_i^t)}{\lb(\cF^t_i)}\right) \lb(\cF_i^t) = U, %\\
        %&= 2\lb(\cF_i^t)/3 - \lb(H_i^* \cap \cF^t_i) = U,  
      \end{align*}
  where the first inequality follows from that the subgraphs $\{H_j^*\}_{j\in
  I}$ are disjoint and the second inequality follows from that
  $\lb(H_i^*)/\lb(\cF^t_i) \leq 1/3$.
  \end{proof}
  We finish the proof of the lemma by using the above claim to show  the increase of the
  ``potential'' function as stated in~\eqref{eq:potential}. By
  the definition of the new Eulerian partition (it contains $\{H_j^*\}_{j\not
  \in I}$), we have that the increase is at least
  \begin{align*}
	\lb(H^*)^2 - \sum_{j\in I} \lb(H_j^*)^2. 
  \end{align*}
  Let us concentrate on the first term:
	\begin{align*}
	  \lb(H^*)^2& = \left( \lb(\cF^t_i) + \lb(H_i^* \setminus \cF^t_i) + \sum_{j\in I'} \lb(H_j^* \setminus \cF^t_i)\right)^2 \\
	   &\geq \lb(\cF^t_i)\left( \lb(\cF^t_i) + \lb(H_i^* \setminus \cF^t_i) + \sum_{j\in I'} \lb(H_j^* \setminus \cF^t_i)\right).
  \end{align*}
  By Claim~\ref{claim:knapsack}, we have that the expression inside the parenthesis is at least
  \begin{align*}
    \lb(\cF^t_i) + \lb(H_i^* \setminus \cF^t_i) + & \frac{1}{3}\sum_{j\in I\setminus \{i\}} \lb(H_j^* \setminus \cF^t_i) - \lb(H_i^*) \\
   & \geq 
    \lb(\cF^t_i) +  \frac{1}{3}\sum_{j\in I} \lb(H_j^* \setminus \cF^t_i) - \lb(H_i^*).
  \end{align*}
  By using $\lb(H_i^*) \leq \lb(\cF^t_i)/3$, we can further lower bound this expression by
  \begin{align*}
     \frac{1}{3}\lb(\cF_i^t)  + \frac{1}{3}\left(\lb(\cF^t_i) + \sum_{j\in I} \lb(H_j^* \setminus \cF^t_i)\right) =  \frac{1}{3}\lb(\cF_i^t)  + \frac{1}{3}\sum_{j\in I} \lb(H_j^*).
  \end{align*}
   % & \geq \lb(\cF^t_i)\left( \lb(\cF^t_i) - \lb(H_i^*)  + \frac{1}{3}\sum_{j\in I} \lb(H_j^* \setminus \cF^t_i) \right), \\
   % & \geq \lb(\cF^t_i)\left( \frac{1}{3}\lb(\cF_i^t)  + \frac{1}{3}\left(\lb(\cF^t_i) + \sum_{j\in I} \lb(H_j^* \setminus \cF^t_i)\right) \right) \qquad \mbox{(using $\lb(H_i^*) \leq \lb(\cF^t_i)/3$)},\\
   % & = \lb(\cF^t_i)\left( \frac{1}{3}\lb(\cF_i^t)  + \frac{1}{3}\sum_{j\in I} \lb(H_j^*) \right), \\
   % & \geq  \frac{\varepsilon^2 \lb(V)^2}{3n^2}  + \lb(H_i^*) \sum_{j\in I} \lb(H_j^*)  \qquad \mbox{(using $\lb(\cF^t_i) \geq 3\lb(H_i^*)$ and $\lb(\cF^t_i) \geq \varepsilon \lb(V)/n$).}
%	\end{align*}
  Finally, as $\lb(\cF^t_i) \geq \varepsilon \lb(V)/n$, $\lb(\cF^t_i) \geq 3\lb(H_i^*)$, and $\lb(H_j^*) \leq \lb(H_i^*)$ for all $j\in I$, we have  
  \begin{align*}
    \lb(H^*)^2 - \sum_{j\in I} \lb(H_j^*)^2 & \geq \lb(H^*)^2 - \lb(H_i^*) \sum_{j\in I} \lb(H_j^*) \\
    & \geq \lb(\cF^t_i) \left( \frac{1}{3}\lb(\cF_i^t)  + \frac{1}{3}\sum_{j\in I} \lb(H_j^*) \right) - \lb(H_i^*) \sum_{j\in I} \lb(H_j^*) \\
    &\geq \frac{\lb(\cF^t_i)^2}{3} + \frac{\lb(\cF^t_i)}{3} \sum_{j\in I} \lb(H_j^*)  - \lb(H_i^*) \sum_{j\in I} \lb(H_j^*) \\
    &\geq  \frac{\varepsilon^2}{3n^2} \lb(V)^2
  \end{align*}
  which completes the proof of Lemma~\ref{lem:polytime}.
\end{proof}

\section{Discussion and Open Problems}
\label{sec:discuss}
We gave  a new approach for approximating the asymmetric
traveling salesman problem. It is based on relaxing the global connectivity
requirements into local connectivity conditions, which is formalized as 
Local-Connectivity ATSP. We showed a rather easy $3$-light
algorithm for Local-Connectivity ATSP on shortest path metrics of node-weighted
graphs. This yields via our generic reduction a constant factor approximation
algorithm for Node-Weighted ATSP. However, we do not know any $O(1)$-light algorithm for Local-Connectivity ATSP on general metrics and, motivated by our generic reduction, we raise
the following intriguing  question:

\begin{openquestion}
Is there a $O(1)$-light algorithm for Local-Connectivity ATSP on general metrics? 
\end{openquestion}
We note that there is great flexibility in the exact choice of the lower bound
$\lb$ as noted in Remark~\ref{rem:lb}. A further generalization of our
approach is to interpret it as a primal-dual approach.  Specifically, it might
be useful to interpret the lower bound as a feasible solution of the dual of
the Held-Karp relaxation: the lower bound is then not only defined over the
vertices but over all cuts in the graph. We do not know if any of these
generalizations are useful at this point and it may be that there is a nice
$O(1)$-light algorithm for Local-Connectivity ATSP without changing the definition of $\lb$. 

By specializing the generic reduction to Node-Weighted ATSP,  it is possible to improve our bounds slightly for this case. 
Specifically, one can exploit the fact that a cycle $C$ always has $w(C) \leq
\lb(C)$ in these metrics. This allows one to change the bound in
Step~\ref{en:u3} of the update phase to be $w(C) \leq \lb(\low(\tilde G))$
instead of   $w(C) \leq \alpha \lb(\low(\tilde G))$, which in turn improves the
upper bound on the integrality gap of the Held-Karp relaxation to $4\cdot
\alpha + 1= 13$ (since $\alpha =3$ for node-weighted metrics). That said, we do
not see how to make a significant improvement in the guarantee and  it would be very interesting with a tight analysis of the integrality gap of the Held-Karp relaxation for Node-Weighted ATSP. We believe that such a result would also be very interesting even if we restrict ourselves to shortest path metrics of unweighted graphs.

Finally, let us remark that the recent progress for STSP on shortest path
metrics of unweighted graphs is not known to extend to  node-weighted graphs,
i.e., Node-Weighted STSP. Is it possible to give a $(1.5-\varepsilon)$-approximation algorithm for Node-Weighted STSP for some constant $\varepsilon>0$?  
We think that this is a very natural question that lies in between the now fairly well understood 
STSP on shortest path metrics of unweighted graphs and STSP on general metrics (i.e., edge-weighted instead of node-weighted graphs).

\subsection*{Acknowledgments}
The author is very grateful to L\'{a}szl\'{o} V\'{e}gh,  Johan H\aa stad, and Hyung-Chan An for inspiring
discussions and valuable comments that influenced this work.
We also thank Jakub Tarnawski and Jens Vygen for useful feedback on the manuscript.

This research is supported by ERC Starting Grant 335288-OptApprox.

\bibliographystyle{abbrv}
{\small \bibliography{atsp}}

\end{document}